\def\eqref#1{equation~\ref{#1}}
\def\1{\bm{1}}
\def\ru{{\textnormal{u}}}
\def\rvg{{\mathbf{g}}}
\def\rvk{{\mathbf{k}}}
\def\rvx{{\mathbf{x}}}
\def\rvy{{\mathbf{y}}}
\def\vp{{\bm{p}}}
\DeclareMathAlphabet{\mathsfit}{\encodingdefault}{\sfdefault}{m}{sl}
\SetMathAlphabet{\mathsfit}{bold}{\encodingdefault}{\sfdefault}{bx}{n}
\newcommand{\E}{\mathbb{E}}
\newcommand{\R}{\mathbb{R}}
\DeclareMathOperator*{\argmax}{arg\,max}
\setlist[itemize]{itemsep=2pt, parsep=2pt, topsep=0pt, partopsep=0pt}
\definecolor{codegreen}{rgb}{0,0.6,0}
\definecolor{codegray}{rgb}{0.5,0.5,0.5}
\definecolor{codepurple}{rgb}{0.58,0,0.82}
\definecolor{backcolour}{rgb}{0.95,0.95,0.92}
\lstdefinestyle{mystyle}{
    backgroundcolor=\color{backcolour},   
    commentstyle=\color{codegreen},
    keywordstyle=\color{magenta},
    numberstyle=\tiny\color{codegray},
    stringstyle=\color{codepurple},
    basicstyle=\ttfamily\footnotesize,
    breakatwhitespace=false,         
    breaklines=true,                 
    captionpos=b,                    
    keepspaces=true,                 
    numbers=left,                    
    numbersep=5pt,                  
    showspaces=false,                
    showstringspaces=false,
    showtabs=false,                  
    tabsize=2,
    inputencoding=utf8,
    escapeinside={(*@}{@*)}, 
}
\newtheorem{proposition}{Proposition}[section]
\newtheorem{lemma}{Lemma}[section]
\title{WaterPool: A Watermark Mitigating Trade-offs among Imperceptibility, Efficacy and Robustness}
\author{%
  Baizhou Huang,~~~ Xiaojun Wan\\
  Wangxuan Institute of Computer Technology, Peking University\\
  The MOE Key Laboratory of Computational Linguistics, Peking University\\
  \texttt{hbz19@pku.edu.cn, wanxiaojun@pku.edu.cn} \\
}
\begin{document}

\maketitle

\vspace{-15pt}
\begin{abstract}
With the increasing use of large language models (LLMs) in daily life, concerns have emerged regarding their potential misuse and societal impact.
Watermarking is proposed to trace the usage of specific models by injecting patterns into their generated texts. 
An ideal watermark should produce outputs that are nearly indistinguishable from those of the original LLM (imperceptibility), while ensuring a high detection rate (efficacy), even when the text is partially altered (robustness).
Despite many methods having been proposed, none have simultaneously achieved all three properties, revealing an inherent trade-off.
This paper utilizes a key-centered scheme to unify existing watermarking techniques by decomposing a watermark into two distinct modules: a key module and a mark module. Through this decomposition, we demonstrate for the first time that the key module significantly contributes to the trade-off issues observed in prior methods. Specifically, this reflects the conflict between the scale of the key sampling space during generation and the complexity of key restoration during detection.
To this end, we introduce \textbf{WaterPool}, a simple yet effective key module that preserves a complete key sampling space required by imperceptibility while utilizing semantics-based search to improve the key restoration process. WaterPool can integrate with most watermarks, acting as a plug-in. Our experiments with three well-known watermarking techniques show that WaterPool significantly enhances their performance, achieving near-optimal imperceptibility and markedly improving efficacy and robustness (+12.73\% for KGW, +20.27\% for EXP, +7.27\% for ITS).
\end{abstract}

\vspace{-10pt}
\section{Introduction}

\begin{wrapfigure}[16]{r}{0.4\textwidth}
\vspace{-20pt}
  \centering
  \includegraphics[width=0.95\linewidth]{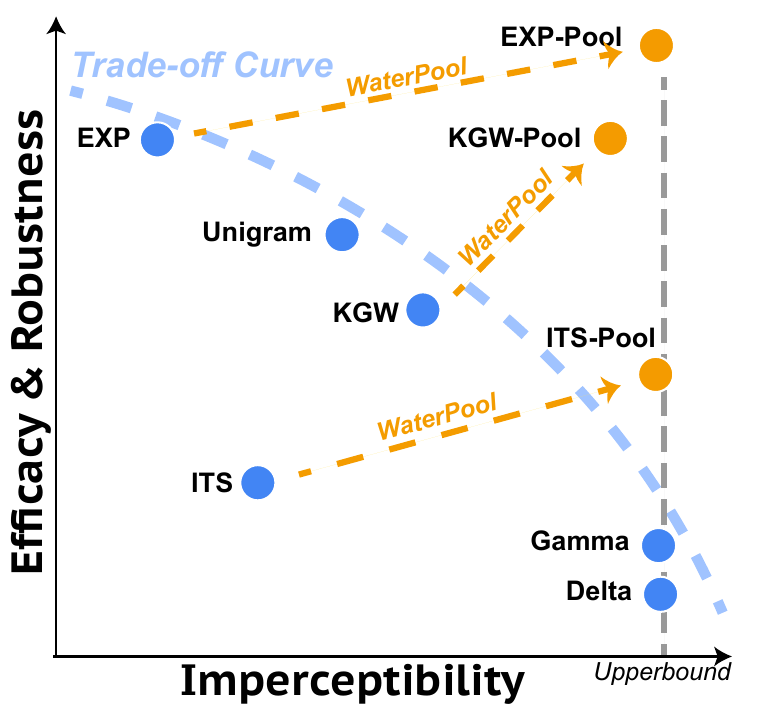}
  \caption{\small Previous methods made trade-offs among imperceptibility, efficacy and robustness. WaterPool mitigates this problem and improve KGW, ITS, EXP significantly.}
  \label{fig:intro-trade-off-performance}
\end{wrapfigure}

The world has recently witnessed the great power of large language models (LLMs). Models like OpenAI's ChatGPT have been widely integrated into daily life. However, the widespread use of these systems has raised significant concerns about their potential misuse. For example, LLMs could be used to generate massive amounts of fake news or automated comments to manipulate social media, posing threats to academic integrity and intellectual property rights \cite{10.1145/3442188.3445922,liuSurveyTextWatermarking2023}.


To address these issues, watermarking has been proposed to track the usage of specific models \cite{kirchenbauerWatermarkLargeLanguage2023}. An ideal watermark embeds an invisible pattern within generated contents of an LLM by sampling outputs from a stochastic modified distribution. The expectation of the modified distribution is nearly identical to the original one, making the watermarked text almost indistinguishable from the original (\textbf{imperceptibility}). But at the same time, this pattern can be reliably detected by statistical detector with a low false-positive rate (\textbf{efficacy}) and remains detectable even if the text is corrupted by semantic-preserving attacks (\textbf{robustness}).


Despite significant progress being made in prior works, achieving all three properties simultaneously has proven challenging, as illustrated in Figure \ref{fig:intro-trade-off-performance}. It is widely accepted that there is a trade-off among imperceptibility, efficacy, and robustness \citep{ajithPerformanceTradeoffsWatermarking2023,molendaWaterJudgeQualityDetectionTradeoff2024}. Previous methods often use hyper-parameters to balance this trade-off, like the $\delta$ in KGW \citep{kirchenbauerWatermarkLargeLanguage2023} controlling the degree of distribution shift.


In this paper, we provide a key-centered scheme to review and unify typical watermarking techniques. The scheme decomposes a watermarking technique into two independent modules, a key module and a mark module, as shown in Figure \ref{fig:intro-framework}. During generation, the key module samples a private key to provide a source of randomness. It is then utilized by the mark module as a seed to modify the next token distribution, from which watermarked texts are sampled. During detection, the key module attempts to restore the possible key from the given candidate text. Then the mark module aligns the text with the restored key to compute statistics, which imply the likelihood of watermark presence. Under the decomposition, we separate the requirements of imperceptibility, efficacy and robustness into the two modules. Building on this, we for the first time demonstrate that the key module significantly contributes to the trade-off problem. Specifically, it reflects the conflict between the scale of key sampling space during generation and the complexity of key restoration during detection.

To overcome this trade-off, we introduce \textbf{WaterPool}, a simple but effective key module. WaterPool maintains the complete key sampling space, crucial for imperceptibility, while leveraging a semantics-based search to significantly enhance the precision and effectiveness of the key restoration process, thereby ensuring high robustness against attacks. We integrate WaterPool into three of the most renowned watermarking techniques, EXP \citep{kuditipudiRobustDistortionfreeWatermarks2023}), KGW \citep{kirchenbauerWatermarkLargeLanguage2023} and ITS \citep{kuditipudiRobustDistortionfreeWatermarks2023}). WaterPool effectively mitigates the traditionally "inevitable" trade-offs, achieving superior performance as shown in Figure \ref{fig:intro-trade-off-performance}. 

Our experiments include two scale of large language models (LLMs) across tasks of open-ended generation and long-form question answering. Experimental results demonstrate the supreme capabilities of our proposed WaterPool. On one hand, it elevates the imperceptibility of KGW, EXP and ITS to near-optimal levels. On the other hand, it significantly enhances the efficacy and robustness of previous watermarking techniques, yielding substantial improvements across different experimental settings (+12.73\% for KGW, +20.27\% for EXP, +7.27\% for ITS).

\begin{figure*}[!tb]
    \centering
    \includegraphics[width=\linewidth]{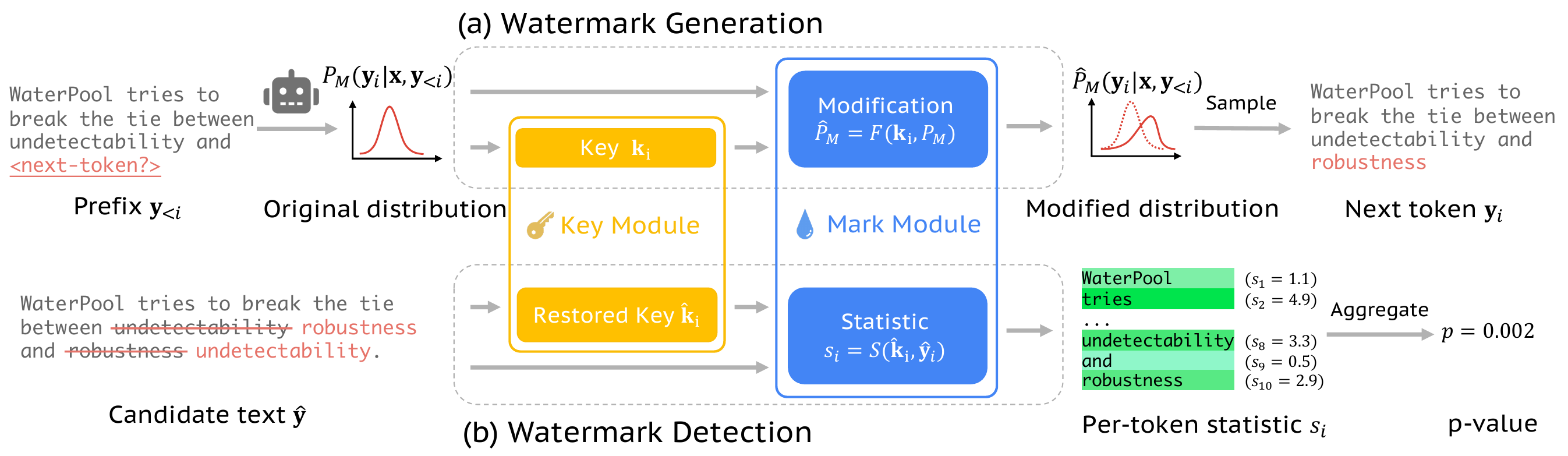}
    \vspace{-10pt}
    \caption{\small Overview of key-centered watermarking scheme. A watermark is decomposed into two modules, a key module and a mark module. (a) During generation, the LLM provides an next token distribution $P_M$. The key module samples a private key $\rvk_i$ as a random seed for the mark module to stochastically modify the distribution to $\hat P_M$, from which watermarked texts are sampled. (b) During detection, the key module restores the key $\hat \rvk_i$ for each candidate token. The mark module then calculates the per-token statistic $s_i$ based on the restored key and aggregates them for $p$-value.}
    \label{fig:intro-framework}
\end{figure*}

\section{Preliminary}
\label{sec:problem-formulation}
\label{sec:three-properties}


\paragraph{Problem formulation.}
We begin by formalizing the process of watermarking. Given any prompt $\rvx$, a LLM will generate a sequence of outputs $\rvy_i\sim P_M(\cdot|\rvx,\rvy_{<i})$ in an auto-regressive manner, where $P_M(\cdot |\rvx,\rvy_{<i})\in\Delta(\Sigma)$ is the $i$-th token distribution over the vocabulary $\Sigma$. A watermark invokes in generation via stochastically modifying the distribution to a new distribution $\hat P_M(\cdot |\rvx,\rvy_{<i})$.
The detection of watermarked texts is formulated as a hypothesis testing problem, where the alternative hypothesis is that the candidate is sampled from a modified distribution. It is typically proved by gathering per-token statistics $s_i$ for a one-tailed significance test.


\paragraph{Imperceptibility.} An ideal watermark should maintain the output distribution of LLM as unchanged as possible \citep{christUndetectableWatermarksLanguage2023,huUnbiasedWatermarkLarge2023,kuditipudiRobustDistortionfreeWatermarks2023,wuDiPmarkStealthyEfficient2023}. It is both common and practical of watermarks to work as a plug-in to LLMs. Therefore, the shift in the output distribution should be minimal to preserve the inherent properties of the original model. Moreover, significant different patterns, such as the frequent occurrence of specific tokens in Unigram \citep{zhaoProvableRobustWatermarking2023} may induce easier attacks. Formally, we define the imperceptibility following "$N$-shot undetectable" from \citet{huUnbiasedWatermarkLarge2023},
\vspace{-7pt}
\begin{equation}
    \prod_n^N P_M(\rvy^n |\rvx^n) = \E_{\hat P_M}[ \prod_n^N \hat P_M(\rvy^n|\rvx^n)] \quad \forall \rvx^n,\rvy^n \in \Sigma^*
\end{equation}

\vspace{-10pt}
Here, superscripts $n$ indicates different rounds of generation. For LLMs, which model language in an auto-regressive manner, the equation above can be expressed as:
\vspace{-7pt}
\begin{equation}
    \label{eq:def-imperceptibility}
    \prod_{i,n} P_M(\rvy_i |\rvx^n,\rvy_{<i}^n) = \E_{\hat P_M}[\prod_{i,n} \hat P_M(\rvy_i |\rvx^n,\rvy_{<i}^n)] \quad \forall \rvx^n,\rvy^n \in \Sigma^*
\end{equation}

\vspace{-10pt}
We want to highlight the importance of the product over multiple generations above. It indicates that it is infeasible to distinguish between the original and the watermarked texts without priori knowledge about the modified distribution, even when many rounds of queries are allowed \citep{christUndetectableWatermarksLanguage2023}.

\vspace{-10pt}
\paragraph{Efficacy.} An ideal watermark technique should be able to distinguish watermarked texts from the others. Empirically, it is required to achieve high true positive rate with low false positive rate. Most watermarking techniques achieve this by ensuring a substantial difference between the per-token statistic under alternative hypothesis ($H_1$: the candidate token is sampled from the modified distribution $\hat P_M(\cdot |\rvx^n,\rvy^{n}_{<i})$) and null hypothesis ($H_0$: the candidate token is sampled from other distributions). This can be formulated as,
\vspace{-4pt}
\begin{equation}
    \label{eq:def-efficacy}
    \E[s_i|H_1]-\E[s_i|H_0] \geq \phi(\vp^i)
\end{equation}

\vspace{-7pt}
where $s_i$ is the statistic of $i$-th token. We denote the left-hand side as \textit{statistical difference}. $\phi(\vp^i)$ only depends on $\vp^i$, the probability vector for $i$-th step distribution $P_M(\cdot |\rvx,\rvy_{<i})$, remaining constant given $\rvx$ and $\rvy_{<i}$. It indicates the innate potential for watermark injection\footnote{$\phi(\vp)$ has different forms and names in previous works. \citet{kuditipudiRobustDistortionfreeWatermarks2023} defines it as watermark potentials. \citet{kirchenbauerWatermarkLargeLanguage2023} connects it with a special form of entropy.}. 
The statistical difference enables the application of statistical tests like permutation tests \citep{kuditipudiRobustDistortionfreeWatermarks2023} or parametric tests assuming specific distribution forms under the null hypothesis \citep{kirchenbauerWatermarkLargeLanguage2023,fernandezThreeBricksConsolidate2023}.

\vspace{-10pt}
\paragraph{Robustness.} Robustness is a one-step-further requirement of efficacy. 
The statistical pattern of watermarked texts could be vulnerable to potential attacks, including lexical modification or paraphrasing \citep{krishnaParaphrasingEvadesDetectors2023,kirchenbauerReliabilityWatermarksLarge2023}. An ideal watermark technique should be resilient to removal and should maintain high efficacy even after such semantics-preserved attacks.

\section{WaterPool: a Key Module Mitigating Trade-offs}
\vspace{-5pt}

\subsection{Decomposition of Watermark: a Key-centered Scheme}
\vspace{-5pt}


As stated in Section \ref{sec:problem-formulation}, the critical part of watermarking falls in the modified distribution, which is stochastic and determined by a random seed, i.e. the \textit{private key} \citep{christUndetectableWatermarksLanguage2023,kuditipudiRobustDistortionfreeWatermarks2023}. With private keys as connection, we decompose watermarks into two independent modules: a key module and a mark module. The key module handles the sampling and restoration of private keys, while the mark module is responsible for the modification process and per-token statistic. 

Specifically, during the $i$-th step of generation, the key module samples a private key $\rvk_i$ from the possible key space $\Xi\subset\R$ to provide randomness. Then the mark module takes the sampled key as a random seed to modify the next token distribution $P_i:=P_M(\cdot |\rvx^n,\rvy_{<i}^n)$ to $F(\rvk_i,P_i)$, where $F:\R\times\Delta(\Sigma)\rightarrow\Delta(\Sigma)$ is a stochastic function. For example, KGW \citep{kirchenbauerWatermarkLargeLanguage2023} increases the probabilities of randomly selected tokens by adding a constant to their logits.

During detection, a candidate text $\hat \rvy$ is given. Watermarking techniques generally frame the detection as a hypothesis testing problem, treating each token as an independent sample. For each token $\hat\rvy_i$, the key module tries to restore the corresponding private key $\hat \rvk_i$ used in generation based on the context. The mark module then calculates the per-token statistic $s_i=S(\hat\rvy_i, \hat\rvk_i)$, where $S:\R\times\Sigma\rightarrow\R$. 
These statistics are aggregated to obtain an overall score, indicating the confidence that the entire sequence is watermarked. 

In this scheme, the key module and the mark module operate independently, with private keys serving as the only connection. This independence allows for the combination of any key module with any mark module to create new watermarking methods. We review several well-known watermarking techniques and unify them into this scheme. The common designs of mark modules and key modules are presented in Table \ref{tab:prior-watermark-design-mark} and \ref{tab:prior-watermark-design-key}.

\begin{table}[tb]
  \vspace{-10pt}
  \newlength{\twidth}
  \setlength{\twidth}{6cm}
  \caption{
  Mark modules of typical watermarks. 
  $P_i(\cdot)$, $C_i(\cdot)$ and $L_i(\cdot)$ are probability distribution function, cumulative distribution function and logit function of the $i$-th step generation. 
  $ord: \Sigma\rightarrow \{1,...,|\Sigma|\}$ is a function mapping each token to its order in the vocabulary. 
  $\delta(t)$ represents a degeneration distribution taking only one value $t$. 
  Subscript "$\rvk_i$" indicates a random variable seeded by $\rvk_i$. For example, $\mathcal{G}_{\rvk_i}$ is a random $\gamma$ ratio partition of vocabulary, $\pi_{\rvk_i}: \Sigma\rightarrow \Sigma$ represents a random permutation over the vocabulary. 
  Superscript "$perm$" indicates a distribution function after vocabulary permutation, such as $P^\text{perm}$.
  Other notations like $\gamma, \tau$ are fixed hyper-parameters.
  }
  \vspace{3pt}
  \centering
  \makebox[\textwidth][c]{
  \resizebox{1\linewidth}{!}{
  \begin{tabular}{cccc}
    \toprule
    Name & Reweight Function $F(\rvk_i, P_i)$ & Statistic $S(\hat\rvk_i,\hat\rvy_i)$ & Used in\\
    \midrule
    logits-add & 
    \begin{minipage}[c]{\twidth}
    \scriptsize
        $ \forall t\in \Sigma, \hat L_i(t) = L_i(t) + \tau\cdot \1_{t\in \mathcal G_{\rvk_i}}$ \\
        $F(\rvk_i, P_i) = \text{Softmax}(\hat L_i)$
    \end{minipage}
    & 
    \begin{minipage}[c]{\twidth}
        $S(\hat\rvk_i,\hat\rvy_i)=\frac{\1_{\hat\rvy_i\in \mathcal G_{\hat\rvk_i}} - \gamma}{\sqrt{\text{len}(\rvy)\gamma(1-\gamma)}}$
    \end{minipage}
    &
    KGW, Unigram
    \\
    \midrule
    inverse-sample & 
    \begin{minipage}[c]{\twidth}
    \scriptsize
        $\ru_{\rvk_i}\sim\mathcal U([0,1])$\\
        $\forall t\in\Sigma, P^{\text{perm}}_i(t)=P_i(\pi_{\rvk_i}^{-1}(t))$\\
        $z^*=\min\{z\in\mathbb{Z}: \sum_{\text{ord}(t)\leq z} P_i^{\text{perm}}(t) > \ru_{\rvk_i}\}$\\
        $t^*=\pi_{\rvk_i}^{-1}\circ\text{ord}^{-1}(z^*)$\\
        $F(\rvk_i,P_i)=\delta(t^*)$
    \end{minipage}
    & 
    \begin{minipage}[c]{\twidth}
        $S(\hat\rvk_i,\hat\rvy_i)=(\ru_{\hat\rvk_i}-\frac12)\cdot(\frac{\text{ord}(\pi_{\hat\rvk_i}(\hat\rvy_i))-1}{|\Sigma|-1}-\frac12)$
    \end{minipage}
    &
    ITS
    \\
    \midrule
    gumbel-sample & 
    \begin{minipage}[c]{\twidth}
    \scriptsize
        $\rvg_{\rvk_i}\in\R^{|\Sigma|}, \rvg_{\rvk_i,j}\overset{\mathrm{iid}}{\sim} \text{Gumbel(0,1)}$\\
        $t^*=\arg\max_{t\in\Sigma} \log P_i(t)+\rvg_{\rvk_i,\text{ord}(t)}$\\
        $F(\rvk_i,P_i)=\delta(t^*)$
    \end{minipage}
    & 
    \begin{minipage}[c]{\twidth}
        $S(\hat\rvk_i,\hat\rvy_i)=-\exp(-\rvg_{\rvk_i,\text{ord}(\hat\rvy_i)})$
    \end{minipage}
    &
    EXP
    \\
    \midrule
    prob-scale & 
    \begin{minipage}[c]{\twidth}
    \scriptsize
        $\forall t\in\Sigma, P^{\text{perm}}_i(t)=P_i(\pi_{\rvk_i}^{-1}(t))$\\
        $\forall t\in\Sigma, C^{\text{perm}}_i(t)=\sum_{\text{ord}(t')<\text{ord}(t)} P^{\text{perm}}_i(t')$ \\
        $\forall t\in\Sigma, \hat C_i(t)=\min(2C^{\text{perm}}_i(t), 1)$\\
        $\forall t\in\Sigma, \hat P^{\text{perm}}_i(t)=\hat C_i(t)-\hat C_i(\text{ord}^{-1}(\text{ord}(t) - 1))$\\
        $\forall t\in\Sigma, \hat P_i(t)=\hat P^{\text{perm}}_i(\pi_{\rvk_i}(t))$\\
        $F(\rvk_i,P_i)=\hat P_i$
    \end{minipage}
    & 
    \begin{minipage}[c]{\twidth}
    \scriptsize
        $\forall t\in\Sigma, P^{\text{perm}}_i(t)=P_i(\pi_{\hat\rvk_i}^{-1}(t))$\\
        $\forall t\in\Sigma, C^{\text{perm}}_i(t)=\sum_{\text{ord}(t')<\text{ord}(t)} P^{\text{perm}}_i(t')$ \\
        $\forall t\in\Sigma, \hat C_i(t)=\min(2C^{\text{perm}}_i(t), 1)$\\
        $\forall t\in\Sigma, \hat P^{\text{perm}}_i(t)=\hat C_i(t)-\hat C_i(\text{ord}^{-1}(\text{ord}(t) - 1))$\\
        $\forall t\in\Sigma, \hat P_i(t)=\hat P^{\text{perm}}_i(\pi_{\hat\rvk_i}(t))$\\
        $S(\hat\rvk_i,\hat\rvy_i)=\log \hat P_i(\hat\rvy_i) - \log P_i(\hat\rvy_i)$
    \end{minipage}
    & 
    Gamma
    \\
        
    \bottomrule
  \end{tabular}
  }
  }
  \label{tab:prior-watermark-design-mark}
\vspace{-10pt}
\end{table}

\begin{table}[tb]
  \newlength{\twidthInKeyTable}
  \setlength{\twidthInKeyTable}{6cm}
  \caption{
  Key modules of typical watermarks. 
  Both $\Xi$ and $k$ are initialized once, and then fixed on every generation.
  $L$ is the maximum length of LLM. $N, c, \eta$ are all fixed hyper-parameters.
  }
  \vspace{3pt}
  \centering
  \makebox[\textwidth][c]{
  \resizebox{\linewidth}{!}{
  \begin{tabular}{cccc}
    \toprule
    Name & Key Sampling $\rvk_i$ & Key Restoration $\hat\rvk_i$ & Used in \\
    \midrule
    greedy-search & 
    \begin{minipage}[c]{4.3cm}
        $\Xi = \{\xi^1,...,\xi^K\}\overset{i.i.d}{\sim} \mathcal U(\R^L)$\\
        $\rvk\sim \mathcal U(\Xi)$
    \end{minipage}
    & 
    \begin{minipage}[c]{\twidthInKeyTable}
        {
        \small
        \begin{align*}
            \hat\rvk = \arg\max_{\xi\in\Xi} d_\text{edit}(&\xi,\hat\rvy), \text{ where}\\
            d_\text{edit}(\xi,\hat\rvy)=\max\{&d_\text{edit}(\xi_{2:},\hat\rvy_{2:}) + S(\xi_1,\hat\rvy_1),\\
            &d_\text{edit}(\xi,\hat\rvy_{2:}) -\eta,\\
            &d_\text{edit}(\xi_{2:},\hat\rvy) -\eta\}
        \end{align*}
        }
    \end{minipage}
    & ITS, EXP
    \\
    \midrule
    context-hash & 
    \begin{minipage}[c]{4.3cm}
        $\rvk_i=\text{hash}(\rvy_{i-c:i-1})$
    \end{minipage}
    & 
    \begin{minipage}[c]{\twidthInKeyTable}
        $\,\rvk_i=\text{hash}(\hat\rvy_{i-c:i-1})$
    \end{minipage}
    &
    KGW, Gamma, Delta
    \\
    \midrule
    fixed-constant & 
    \begin{minipage}[c]{4.3cm}
    \small
        $k\sim \mathcal U(\R);\,\rvk_i=k$
    \end{minipage}
    & 
    \begin{minipage}[c]{\twidthInKeyTable}
        $\,\rvk_i=k$
    \end{minipage}
    &
    Unigram
    \\
    \bottomrule
  \end{tabular}
  }
  }
  \label{tab:prior-watermark-design-key}
\vspace{-5pt}
\end{table}

\subsection{Behind Trade-offs: Conflicts within Key Module}
\label{sec:trade-off-under-key-centered-framework}

In this section, we examine how the key module affects imperceptibility, efficacy and robustness. 
As defined in Section \ref{sec:three-properties}, the statistical difference between the null and alternative hypotheses is crucial for ensuring efficacy and robustness, i.e., whether the candidate token is sampled from a modified distribution. 
Regardless of the choice of statistic $S(\hat\rvk_i,\hat\rvy)$, prior knowledge of the modified distribution is essential.
Since the modification process is stochastic and determined by the private key, successful detection requires that the restored key $\hat\rvk_i$ matches the key used during generation.


The key restoration process fundamentally involves searching through the potential key space $\Xi$.
Both the ITS and EXP methods use a greedy search strategy, enumerating each potential key to identify the one exhibiting the highest statistic. While reliable in key restoration, it is markedly time-consuming. Moreover, the per-token statistic is now $\tilde s=\max s$, potentially diminishing the statistical difference. To mitigate the issue, ITS and EXP limit the possible key space size. To the extreme, Unigram directly fixes the private key.
Alternatively, methods like KGW, Delta, and Gamma take the context window through a hash function for key restoration, reducing time complexity compared to exhaustive searches. But there are risks of incorrect key restoration if the context is altered by attacks, diminishing their robustness.

For imperceptibility under the key-centered scheme, Equation \ref{eq:def-imperceptibility} can be rewritten as:
\vspace{-2pt}
\begin{equation*}
    \prod_{i,n} P_M(\rvy^n_i |\rvx^n,\rvy_{<i}^n) = \E_{\rvk^1,...,\rvk^N}[\prod_{i,n} F(\rvk_i^n,P_i^n)(\rvy_i^n)] \quad \forall \rvx^n,\,\rvy^n\in\Sigma^*
\end{equation*}

\vspace{-5pt}
We propose the following requirements for imperceptibility:
\begin{proposition}
\label{prop:imperceptiblity-requirements}
    A watermark is imperceptible if\, (1) Independent condition: the sampled private key vectors for each generated output are mutually independent, i.e. $\rvk^1,...,\rvk^N\overset{i.i.d}{\sim}\mathcal U(\R^L)$\footnote{$L$ is the maximum output length of LLM.}; (2) Unbiased condition: the modification function $F$ satisfies $P_M(\cdot |\rvx^n,\rvy_{<i}^n) = \E_{\rvk_i\sim\mathcal U(\R)}[F(\rvk_i,P_i)]$.
\end{proposition}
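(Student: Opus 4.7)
The plan is to take the rewritten form of the imperceptibility condition that already appears above the proposition, namely
\[
\prod_{i,n} P_M(\rvy_i^n\mid \rvx^n,\rvy_{<i}^n) \;=\; \E_{\rvk^1,\dots,\rvk^N}\!\Big[\prod_{i,n} F(\rvk_i^n,P_i^n)(\rvy_i^n)\Big],
\]
and to show that the right-hand side collapses to the left-hand side under the two stated conditions. Note that the sequences $\rvx^n$ and $\rvy^n$ are held fixed on the right, so each $P_i^n:=P_M(\cdot\mid\rvx^n,\rvy_{<i}^n)$ is a deterministic distribution (no coupling with the random keys), which is what makes the factorisation clean.

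First, I would invoke the independent condition. Since $\rvk^1,\dots,\rvk^N$ are i.i.d.\ uniform on $\R^L$, the expectation over the joint law factors across generations:
\[
\E_{\rvk^1,\dots,\rvk^N}\!\Big[\prod_{n}\prod_i F(\rvk_i^n,P_i^n)(\rvy_i^n)\Big] \;=\; \prod_{n} \E_{\rvk^n}\!\Big[\prod_i F(\rvk_i^n,P_i^n)(\rvy_i^n)\Big].
\]
Next, within a single generation, the components of $\rvk^n\sim\mathcal U(\R^L)$ are themselves mutually independent (as $\mathcal U(\R^L)$ is a product measure). Since each factor $F(\rvk_i^n,P_i^n)(\rvy_i^n)$ only depends on the single coordinate $\rvk_i^n$ (recall $P_i^n$ is deterministic given the fixed $\rvy^n$), I can push the expectation inside the product over $i$ to obtain $\prod_i \E_{\rvk_i^n\sim\mathcal U(\R)}[F(\rvk_i^n,P_i^n)(\rvy_i^n)]$.

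Finally, I would apply the unbiased condition pointwise. Evaluating the distribution identity $P_M(\cdot\mid\rvx^n,\rvy_{<i}^n)=\E_{\rvk_i\sim\mathcal U(\R)}[F(\rvk_i,P_i^n)]$ at the token $\rvy_i^n$ yields $\E_{\rvk_i^n}[F(\rvk_i^n,P_i^n)(\rvy_i^n)] = P_M(\rvy_i^n\mid\rvx^n,\rvy_{<i}^n)$, and multiplying these equalities back together across $i$ and $n$ recovers the left-hand side.

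The argument is essentially a three-line Fubini calculation, so the substantive content is not in any hard computation but in articulating the two places where independence is used: once across the $N$ generations (from the independent condition), and once across the $L$ coordinates within a single key vector (from the uniform product structure on $\R^L$). The only point that requires a little care is ensuring that $P_i^n$ does not silently depend on keys — which holds precisely because $\rvy^n$ is quantified universally on the outside rather than being itself sampled from $\hat P_M$; this justifies pulling each per-position expectation through the product without needing the tower property.
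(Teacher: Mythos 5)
Your proof is correct and follows essentially the same route as the paper's: factor the expectation across the $N$ i.i.d.\ key vectors, factor again across the $L$ independent coordinates within each vector, and then apply the unbiased condition coordinatewise. The only difference is presentational — you explicitly flag that $P_i^n$ is deterministic because $\rvy^n$ is universally quantified outside the expectation, a subtlety the paper leaves implicit but which your argument rightly identifies as the reason the factorization across $i$ is legitimate.
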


The proposition describes separate requirements for key and mark modules. 
While many mark modules meet the unbiased condition, ensuring independent condition is challenging for key modules.
The independence of private keys over the whole space $\R$ in successive generations indicates the search space $\Xi$ of key restoration grows linearly with the number of generations, as all previously used keys must be considered during detection.

The conflict within key modules now becomes apparent. Imperceptibility requires a large possible key space $\Xi$, which complicates the key restoration process during detection, thereby hindering both efficacy and robustness. This trade-off within the key module is a critical factor underlying the broader trade-off among the three properties.

\subsection{WaterPool: a Semantics-based Key Module}

The previous section highlights a conflict within the key module, which contributes to the trade-off among imperceptibility, efficacy and robustness in watermarking. A natural question arises: is it possible to break the conflict for mitigating the trade-off? To ensure imperceptibility, it is essential to maintain a sufficiently large key space, characterized by an i.i.d. sampling strategy. Thus, an efficient and precise search strategy for the private key is needed for efficacy. The context-hash strategy provides insights by leveraging contextual information, yet hash functions are fragile to attacks. We aim to find a robust signal within the candidate's context that withstands semantic-preserved attacks.

To this end, we propose \textbf{WaterPool}, a simple but effective key module empowered by semantic searching. Specifically, for each generation, WaterPool independently samples a private key vector $\rvk^n\sim\mathcal U(\R^L)$ to meet the independent condition in Proposition \ref{prop:imperceptiblity-requirements}. Each private key is then used for the mark module to modify the next token distribution. We maintain a vector database $[(Enc(\rvy^1), \rvk^1), ..., (Enc(\rvy^N), \rvk^N)]$ to store the semantic embedding $Enc(\rvy^n)$ of each output as keys and the corresponding private key vector $\rvk^n$ as values. For each candidate text $\hat \rvy$ during detection, regardless of whether it is watermarked, the most plausible private key vector $\hat\rvk$ is retrieved based on semantic similarity:
\vspace{-4pt}
\begin{equation*}
    \hat\rvk = \rvk^{n^*}\text{, where } n^*= \arg\max_n \text{sim} (Enc(\rvy^n), Enc(\hat\rvy))
\end{equation*}

\vspace{-10pt}
The restored key vector is then provided for the mark module to calculate the statistics. Given that the most similar text to the candidate is certainly its own even under attacks, this approach guarantees the accurate key restoration if the candidate is watermarked.

As a key module, WaterPool is able to integrate with most of mark modules. In this study, we compose WaterPool with mark modules of three well known watermarking techniques including KGW\citep{kirchenbauerWatermarkLargeLanguage2023}, ITS\citep{kuditipudiRobustDistortionfreeWatermarks2023} and EXP\citep{kuditipudiRobustDistortionfreeWatermarks2023}. These improved watermarks are referred to as KGW-Pool, ITS-Pool and EXP-Pool, respectively. The details of implementation and pseudo codes can be found in Appendix \ref{app:pseudo-code}.

Building on Proposition \ref{prop:imperceptiblity-requirements}, the imperceptibility of WaterPool is readily established with the mark modules of EXP and ITS satisfying the unbiased condition \footnote{The mark module of KGW does not satisfy the unbiased condition. Therefore, WaterPool can only enhance its imperceptibility performance, but not achieve optimal imperceptibility.}.
\vspace{-5pt}
\begin{proposition}
\label{prop:waterpool-imperceptibility}
    Both ITS-Pool and EXP-Pool are imperceptible.
\end{proposition}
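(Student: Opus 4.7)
The plan is to reduce Proposition \ref{prop:waterpool-imperceptibility} to Proposition \ref{prop:imperceptiblity-requirements} by separately verifying its two sufficient conditions. The independent condition is immediate from the construction of WaterPool itself: for each generation $n$, the key module draws a fresh key vector $\rvk^n \sim \mathcal U(\R^L)$ independently of all previous draws and stores it in the vector database. Hence $\rvk^1,\dots,\rvk^N \overset{i.i.d.}{\sim} \mathcal U(\R^L)$ holds by design, regardless of which mark module is attached. The only nontrivial part is therefore to show that the unbiased condition $P_M(\cdot |\rvx^n,\rvy_{<i}^n) = \E_{\rvk_i \sim \mathcal U(\R)}[F(\rvk_i, P_i)]$ holds for the two mark modules at stake.

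For EXP-Pool, the mark module is the gumbel-sample row of Table \ref{tab:prior-watermark-design-mark}: conditioned on $\rvk_i$, one seeds $|\Sigma|$ independent $\text{Gumbel}(0,1)$ variables $\rvg_{\rvk_i,j}$ and outputs $t^* = \arg\max_t \log P_i(t) + \rvg_{\rvk_i,\text{ord}(t)}$. Since $\rvk_i \sim \mathcal U(\R)$ is exactly the source of randomness that makes the $\rvg_{\rvk_i,j}$ honest i.i.d. Gumbel samples, the classical Gumbel--max identity $\Pr[t^* = t] = P_i(t)$ gives $\E_{\rvk_i}[F(\rvk_i,P_i)] = P_i$, which is the unbiased condition.

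For ITS-Pool, the mark module uses $\ru_{\rvk_i} \sim \mathcal U([0,1])$ together with a random vocabulary permutation $\pi_{\rvk_i}$ (both seeded by $\rvk_i$), and returns $t^* = \pi_{\rvk_i}^{-1} \circ \text{ord}^{-1}(z^*)$ where $z^*$ is the first index at which the cumulative distribution of the permuted law $P_i^{\text{perm}}$ exceeds $\ru_{\rvk_i}$. Condition on $\pi_{\rvk_i}$: standard inverse-CDF sampling tells us that $\text{ord}^{-1}(z^*)$ is distributed as $P_i^{\text{perm}}$, so $\pi_{\rvk_i}^{-1}$ applied to it is distributed as $P_i$. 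Since this holds for every realization of the permutation, taking expectation over $\rvk_i$ preserves $P_i$ and the unbiased condition follows. Combining the two conditions, Proposition \ref{prop:imperceptiblity-requirements} then yields imperceptibility for both schemes.

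The main obstacle I anticipate is notational rather than mathematical: the key $\rvk_i$ simultaneously seeds multiple auxiliary random objects ($\rvg$ for EXP; $\ru$ and $\pi$ for ITS), so I must be careful to state that $\rvk_i \sim \mathcal U(\R)$ makes all of these have their nominal product law, and that the expectation in the unbiased condition is taken over this joint randomness. A secondary subtlety is that Proposition \ref{prop:imperceptiblity-requirements} phrases the unbiased condition per-token with a scalar key $\rvk_i \sim \mathcal U(\R)$, whereas WaterPool samples a full vector $\rvk^n \in \R^L$; I would note that the coordinates of a uniform vector are themselves i.i.d.\ uniform scalars, so applying the per-token condition coordinatewise and then factoring the product in Equation~\ref{eq:def-imperceptibility} across both $i$ and $n$ closes the argument.
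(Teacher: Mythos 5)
Your proposal is correct and follows essentially the same route as the paper: reduce to Proposition~\ref{prop:imperceptiblity-requirements}, observe that the independent condition holds by construction of WaterPool's key sampling, and verify the unbiased condition separately for the Gumbel-max (EXP) and inverse-transform (ITS) mark modules. The only difference is one of exposition — the paper derives the Gumbel-max identity and the inverse-transform unbiasedness from first principles as explicit lemmas, whereas you invoke them as classical facts — but the decomposition and the logical structure are the same, and your remark on the scalar-versus-vector key subtlety matches the factoring done in the paper's proof of Proposition~\ref{prop:imperceptiblity-requirements}.
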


\vspace{-5pt}
Furthermore, the efficacy of WaterPool can also be assured based on the efficacy of combined mark module. This can be formalized as, 
\vspace{-2pt}
\begin{proposition}
\label{prop:waterpool-efficacy}
    The statistical difference in WaterPool is bounded from below, as expressed by:
    \begin{equation*}
        \E[S(\hat\rvk_i,\rvy_i)|H_1]-\E[S(\hat\rvk_i,\rvy_i)|H_0]\geq p_{recall}\cdot \phi (\vp^i)
    \end{equation*}
    ,where $\phi (\vp)$ is watermarking potentials of the corresponding mark module depending on the probability vector $\vp^i$ for the $i$-th token distribution $P_M(\cdot|\rvx,\rvy_{<i})$.
\end{proposition}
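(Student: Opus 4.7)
The plan is to prove the lower bound by conditioning on whether the key module succeeds in restoring the correct private key. Define the event $E = \{\hat\rvk_i = \rvk_i\}$ and write $p_{recall} := \Pr[E \mid H_1]$, which is exactly the retrieval accuracy of WaterPool's semantic lookup. By the tower property:
\begin{equation*}
\E[S(\hat\rvk_i,\rvy_i)\mid H_1] = p_{recall}\,\E[S\mid H_1,E] + (1-p_{recall})\,\E[S\mid H_1,\bar E].
\end{equation*}

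Next I would handle the two branches separately. On $E$, the restored key equals the key actually used at generation, so the pair $(\hat\rvk_i,\rvy_i)$ has the same joint law as $(\rvk_i,\rvy_i)$ in the underlying mark module (KGW/ITS/EXP). Hence by the very definition of the watermark potential from Equation~\ref{eq:def-efficacy} applied to the combined mark module, $\E[S\mid H_1,E] - \E[S\mid H_0] \geq \phi(\vp^i)$. On $\bar E$, WaterPool returns a key vector associated with some \emph{other} generation $n'\neq n$; by the independent sampling condition of Proposition~\ref{prop:imperceptiblity-requirements}, this $\hat\rvk_i$ is drawn from $\mathcal U(\R)$ independently of $\rvy_i$, which is precisely the law that defines the null baseline of the underlying mark module. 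Consequently $\E[S\mid H_1,\bar E] = \E[S\mid H_0]$.

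Combining the two branches and subtracting $\E[S\mid H_0]$ on both sides yields
\begin{equation*}
\E[S\mid H_1] - \E[S\mid H_0] = p_{recall}\bigl(\E[S\mid H_1,E] - \E[S\mid H_0]\bigr) \geq p_{recall}\cdot\phi(\vp^i),
\end{equation*}
which is the claimed bound.

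The main obstacle is the second branch: making rigorous that a \emph{mismatched} key behaves statistically like a null key. This relies on two ingredients that must be stated cleanly. First, the pool keys $\rvk^1,\dots,\rvk^N$ are i.i.d.\ uniform on $\R^L$, so any retrieved key $\hat\rvk_i=\rvk^{n'}_i$ with $n'\neq n$ is distributed as $\mathcal U(\R)$ and is independent of $\rvy_i$ conditional on $\bar E$. Second, the per-token statistic of each of the mark modules in Table~\ref{tab:prior-watermark-design-mark} was designed so that, under a uniform key independent of the token, its expectation matches the null-hypothesis expectation; this is a standard property verified token-wise for KGW (the green-list indicator has mean $\gamma$), ITS (difference-of-uniforms), and EXP (exponential transform of the Gumbel variable). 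With these two observations in place, the decomposition above closes the proof.
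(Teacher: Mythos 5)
Your proof is correct and mirrors the paper's strategy: decompose by the retrieval-success event $E=\{\hat\rvk_i=\rvk_i\}$, write $\E[S\mid H_1]=p_{recall}\,\E[S\mid H_1,E]+(1-p_{recall})\,\E[S\mid H_1,\bar E]$, observe that the failure branch collapses to the null baseline, and conclude that the WaterPool statistical difference equals $p_{recall}$ times the mark module's golden-key statistical difference. The substantive difference is one of abstraction. You treat $\phi(\vp^i)$ as a black-box premise, appealing to Equation~\ref{eq:def-efficacy} for the mark module under correct restoration; the paper instead gives three explicit case analyses (Appendix B.4--B.6), each proving a lemma for $\E_\rvk[S(\rvk_i,\rvy_i)\mid\rvy_i,\rvy_{<i}]$ under $H_1$ with golden key (an Exp$(1/P_i(\rvy_i))$ law for EXP, a covariance-of-uniforms computation for ITS, a spike-entropy bound for KGW), computes $\E[S\mid H_0]$, and then applies the same decomposition you wrote. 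The case analyses are needed because Equation~\ref{eq:def-efficacy} is stated as a desideratum for the full watermark with its own key module, not as an already-proved guarantee for the mark module in isolation with oracle key access, and because $\phi$ takes a different concrete form for each module. So your argument is the generic skeleton of which the paper proves three instances; it is the cleaner presentation, but it presupposes exactly what the paper's lemmas supply. One point worth making fully explicit in the failure branch, which the paper also uses tacitly: because the pooled keys $\rvk^1,\ldots,\rvk^N$ are i.i.d.\ $\mathcal U(\R^L)$, drawn independently of all generated texts, and retrieval depends only on text embeddings, a mismatched retrieved key $\rvk^{n'}$ with $n'\neq n$ is independent of $\rvy^n$, so its per-token statistic indeed has the $H_0$ law.
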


\vspace{-5pt}
This proposition indicates that WaterPool can effectively leverage the power of mark modules (i.e. the lower bound of its statistical difference with golden private key restoration), slightly modulated by the recall performance $p_{recall}$ of WaterPool's retriever. We will empirically demonstrate that the recall performance is near-optimal even in large scale database employing relatively weak retrievers in the following experiments.


\vspace{-5pt}
\subsection{Difference from Retrieval Watermark}
\label{sec:diff-from-retrieval-watermark}
\vspace{-5pt}

\citet{krishnaParaphrasingEvadesDetectors2023} proposed a retrieval watermark to distinguish watermarked texts by using semantic retrieval. They store every generated output $o$ in a vector database $D$, and utilize $\max_{o\in D}\text{sim}(o, o^\text{candidate})$ as score of the candidate being watermarked.

WaterPool fundamentally differs from retrieval watermark. WaterPool’s efficacy and robustness rely primarily on the statistical difference guaranteed by the mark module. In specific, $\E[S(\rvk_i,\rvy_i)]$ is designed to be high if $\rvy_i$ is sampled from $\rvk_i$-induced modified distribution and low if $\rvk_i$ and $\rvy_i$ are independent. Therefore, the retriever of WaterPool only needs to retrieve the correct private key for watermarked candidates, without concern for retrieval results of non-watermarked candidates, since all keys stored in the database are independent of all non-watermarked texts.

In contrast, retrieval watermark directly uses similarity as score. For efficacy, it should ensure high similarity scores for watermarked texts and low similarity scores for non-watermarked ones. However, due to the dense semantic space of human language, texts stored in the database often have similar non-watermarked neighbors (\textit{semantic collisions} in \citet{krishnaParaphrasingEvadesDetectors2023}), and hence reducing efficacy. This issue will become more severe as the number of non-watermarked samples or the size of the vector database increases. We conduct an experiment to empirically demonstrate this by adding responses from other models to the same prompt as non-watermarked samples (see Appendix \ref{app:retrieval-watermark-problem}). In this scenario, the performance of retrieval watermark is over 40\% worse than other watermarking methods.

\vspace{-2pt}
\section{Experiments}
\vspace{-2pt}
\paragraph{Datasets.} Following previous works \citep{kirchenbauerWatermarkLargeLanguage2023,kirchenbauerReliabilityWatermarksLarge2023}, we include the C4 dataset for open-ended generation and "Explain Like I'm Five" (ELI5) \citep{fan-etal-2019-eli5} for long-form question answering in our experiments. We sample about 3000 texts from each dataset as prompts.
\vspace{-5pt}
\paragraph{Metrics.} We generate 20 outputs for each prompt, a total of about 60000 samples for evaluation. For both efficacy and robustness, we report true positive rate at 1\% false positive rate (TPR@FPR=1\%). We also include the ROC-AUC metrics in Appendix \ref{app:additional-experiment}. Outputs of the original LLM are considered as non-watermarked. To comprehensively evaluate the robustness of watermarking techniques, we include three different kinds of attacks, namely Lexical-Attack, Dipper-Attack and Translation-Attack. Lexical-attack is a baseline attack by randomly add/delete/replace a small portion of texts. Dipper-attack is a paraphrasing model proposed by \citet{krishnaParaphrasingEvadesDetectors2023}. Translation-attack represents roundtrip-translation, translating texts to another language and then translating them back. For imperceptibility, we split the criteria into two aspects: (1) the distribution bias within each output (2) the independence among different outputs. The former can be evaluated with perplexity while the latter can be roughly evaluated with n-gram distinction \citep{kirchenbauerReliabilityWatermarksLarge2023}. Specifically, we consider the distinction across all outputs (Glob-Distinct-$N$) and within outputs in response to one single prompt (Group-Distinct-$N$).


\paragraph{Implementation details.} 
We conduct experiments on OPT-1.3b and OPT-6.7b following \citet{krishnaParaphrasingEvadesDetectors2023}, using multinomial sampling to generate outputs within the range of [50, 70] tokens. We include several typical methods as baselines, including Gamma\citep{huUnbiasedWatermarkLarge2023}, Delta\citep{huUnbiasedWatermarkLarge2023}, Unigram\citep{zhaoProvableRobustWatermarking2023}, KGW\citep{kirchenbauerWatermarkLargeLanguage2023}, EXP\citep{kuditipudiRobustDistortionfreeWatermarks2023} and ITS\citep{kuditipudiRobustDistortionfreeWatermarks2023}. We use recommended hyper-parameter settings for all baselines in their original paper.
For implementation of mark modules in different WaterPool (i.e. KGW-Pool, ITS-Pool, EXP-Pool), we use identical hyper-parameter settings as the original one. We use a 128 dimension sentence embedding model \citep{nussbaum2024nomic} as the retriever. Other details are presented in Appendix \ref{app:experiment-details}.

\vspace{-5pt}
\subsection{Main Results}
\vspace{-5pt}

{
\begin{table}[!tb]
\caption{Imperceptibility of different watermarking methods on OPT-1.3B. $\Delta$ is the difference between watermarked texts and non-watermarked texts. The best and second-best results are highlighted in \textbf{bold} and \underline{underline}.}
\label{tab:imperceptibility-opt1.3b}
\renewcommand{\arraystretch}{1.1}
\setlength{\tabcolsep}{1.1pt} 
\scriptsize
\centering
\makebox[\textwidth][c]{
\resizebox{\linewidth}{!}{
\begin{tabular}{l|rr|rr|rr|rr|rr} 
    \toprule
    \multicolumn{1}{c}{~} & \multicolumn{2}{c}{Glob-distinct2} & \multicolumn{2}{c}{Glob-distinct3} & \multicolumn{2}{c}{Group-distinct2} & \multicolumn{2}{c}{Group-distinct3} & \multicolumn{2}{c}{ppl} \\
    \midrule
        \multicolumn{1}{c}{~} & \multicolumn{1}{c}{value$\uparrow$} & \multicolumn{1}{c}{$\Delta$$\uparrow$} & \multicolumn{1}{c}{value$\uparrow$} & \multicolumn{1}{c}{$\Delta$$\uparrow$} & \multicolumn{1}{c}{value$\uparrow$} & \multicolumn{1}{c}{$\Delta$$\uparrow$} & \multicolumn{1}{c}{value$\uparrow$} & \multicolumn{1}{c}{$\Delta$$\uparrow$} & \multicolumn{1}{c}{value$\downarrow$} & \multicolumn{1}{c}{$\Delta$$\downarrow$}\\
    \midrule
    \multicolumn{1}{c}{~} & \multicolumn{10}{c}{Open Text Generation} \\
    \midrule
Non-watermark  &  38.8$_{\pm 0.0}$ & 0.0$_{\pm 0.0}$ & \underline{76.2$_{\pm 0.0}$} & \underline{0.0$_{\pm 0.0}$} & 86.2$_{\pm 0.0}$ & 0.0$_{\pm 0.0}$ & \underline{96.2$_{\pm 0.0}$} & \underline{0.0$_{\pm 0.0}$} & \underline{7.8$_{\pm 0.0}$} & \underline{0.0$_{\pm 0.0}$} \\
Gamma  &  38.8$_{\pm 0.0}$ & 0.0$_{\pm 0.0}$ & 76.2$_{\pm 0.0}$ & -0.0$_{\pm 0.0}$ & 86.2$_{\pm 0.0}$ & -0.0$_{\pm 0.0}$ & 96.2$_{\pm 0.0}$ & -0.0$_{\pm 0.0}$ & 7.8$_{\pm 0.0}$ & 0.0$_{\pm 0.0}$ \\
Delta  &  38.8$_{\pm 0.0}$ & -0.0$_{\pm 0.0}$ & 76.2$_{\pm 0.0}$ & -0.0$_{\pm 0.1}$ & 86.2$_{\pm 0.0}$ & -0.0$_{\pm 0.1}$ & 96.2$_{\pm 0.0}$ & -0.0$_{\pm 0.0}$ & 7.8$_{\pm 0.0}$ & 0.0$_{\pm 0.0}$ \\
Unigram  &  33.4$_{\pm 1.9}$ & -5.3$_{\pm 1.9}$ & 69.9$_{\pm 2.8}$ & -6.3$_{\pm 2.7}$ & 82.6$_{\pm 2.4}$ & -3.6$_{\pm 2.4}$ & 95.1$_{\pm 0.6}$ & -1.1$_{\pm 0.6}$ & 9.9$_{\pm 0.6}$ & 2.2$_{\pm 0.6}$ \\
KGW  &  36.7$_{\pm 0.1}$ & -2.0$_{\pm 0.2}$ & 73.6$_{\pm 0.1}$ & -2.7$_{\pm 0.1}$ & 85.5$_{\pm 0.1}$ & -0.7$_{\pm 0.1}$ & 95.9$_{\pm 0.0}$ & -0.3$_{\pm 0.1}$ & 9.6$_{\pm 0.0}$ & 1.8$_{\pm 0.0}$ \\
KGW-Pool  &  \textbf{40.5$_{\pm 0.2}$} & \textbf{1.8$_{\pm 0.2}$} & \textbf{78.7$_{\pm 0.2}$} & \textbf{2.4$_{\pm 0.2}$} & \textbf{87.3$_{\pm 0.1}$} & \textbf{1.1$_{\pm 0.2}$} & \textbf{96.7$_{\pm 0.0}$} & \textbf{0.5$_{\pm 0.1}$} & 9.9$_{\pm 0.0}$ & 2.1$_{\pm 0.0}$ \\
EXP  &  30.2$_{\pm 0.0}$ & -8.5$_{\pm 0.0}$ & 59.2$_{\pm 0.0}$ & -17.0$_{\pm 0.0}$ & 73.4$_{\pm 0.1}$ & -12.8$_{\pm 0.0}$ & 82.2$_{\pm 0.1}$ & -14.0$_{\pm 0.0}$ & 7.8$_{\pm 0.0}$ & 0.0$_{\pm 0.0}$ \\
EXP-Pool  &  38.7$_{\pm 0.0}$ & -0.0$_{\pm 0.0}$ & 76.2$_{\pm 0.0}$ & -0.0$_{\pm 0.0}$ & 86.2$_{\pm 0.0}$ & -0.0$_{\pm 0.1}$ & 96.2$_{\pm 0.0}$ & -0.0$_{\pm 0.0}$ & 7.8$_{\pm 0.0}$ & 0.0$_{\pm 0.0}$ \\
ITS  &  34.4$_{\pm 0.7}$ & -4.4$_{\pm 0.7}$ & 66.4$_{\pm 1.5}$ & -9.8$_{\pm 1.5}$ & 75.2$_{\pm 1.8}$ & -11.0$_{\pm 1.8}$ & 83.7$_{\pm 2.1}$ & -12.6$_{\pm 2.1}$ & \textbf{7.5$_{\pm 0.0}$} & \textbf{-0.3$_{\pm 0.0}$} \\
ITS-Pool  &  \underline{38.8$_{\pm 0.0}$} & \underline{0.0$_{\pm 0.0}$} & 76.2$_{\pm 0.0}$ & -0.0$_{\pm 0.0}$ & \underline{86.2$_{\pm 0.0}$} & \underline{0.0$_{\pm 0.0}$} & 96.2$_{\pm 0.0}$ & -0.0$_{\pm 0.0}$ & 7.8$_{\pm 0.0}$ & 0.0$_{\pm 0.0}$ \\

    \midrule
    \multicolumn{1}{c}{~} & \multicolumn{10}{c}{Long-Form Question Answering} \\
    \midrule

Non-watermark  &  31.5$_{\pm 0.0}$ & 0.0$_{\pm 0.0}$ & 70.0$_{\pm 0.0}$ & 0.0$_{\pm 0.0}$ & 86.7$_{\pm 0.0}$ & 0.0$_{\pm 0.0}$ & 97.0$_{\pm 0.0}$ & 0.0$_{\pm 0.0}$ & \underline{9.5$_{\pm 0.0}$} & \underline{0.0$_{\pm 0.0}$} \\
Gamma  &  31.5$_{\pm 0.0}$ & 0.0$_{\pm 0.0}$ & 70.0$_{\pm 0.0}$ & 0.0$_{\pm 0.1}$ & 86.7$_{\pm 0.0}$ & 0.0$_{\pm 0.0}$ & 97.0$_{\pm 0.0}$ & 0.0$_{\pm 0.0}$ & 9.5$_{\pm 0.0}$ & 0.0$_{\pm 0.0}$ \\
Delta  &  31.5$_{\pm 0.0}$ & 0.0$_{\pm 0.0}$ & 70.0$_{\pm 0.0}$ & -0.0$_{\pm 0.1}$ & 86.8$_{\pm 0.0}$ & 0.0$_{\pm 0.0}$ & 97.0$_{\pm 0.0}$ & 0.0$_{\pm 0.0}$ & 9.5$_{\pm 0.0}$ & 0.0$_{\pm 0.0}$ \\
Unigram  &  26.4$_{\pm 2.0}$ & -5.1$_{\pm 2.0}$ & 62.0$_{\pm 2.9}$ & -7.9$_{\pm 2.9}$ & 81.2$_{\pm 2.5}$ & -5.5$_{\pm 2.5}$ & 94.8$_{\pm 0.6}$ & -2.2$_{\pm 0.6}$ & 10.9$_{\pm 1.4}$ & 1.4$_{\pm 1.4}$ \\
KGW  &  29.5$_{\pm 0.2}$ & -2.0$_{\pm 0.2}$ & 66.2$_{\pm 0.3}$ & -3.8$_{\pm 0.2}$ & 85.4$_{\pm 0.1}$ & -1.3$_{\pm 0.1}$ & 96.4$_{\pm 0.0}$ & -0.7$_{\pm 0.1}$ & 11.6$_{\pm 0.1}$ & 2.1$_{\pm 0.1}$ \\
KGW-Pool  &  \textbf{32.9$_{\pm 0.2}$} & \textbf{1.4$_{\pm 0.2}$} & \textbf{71.6$_{\pm 0.3}$} & \textbf{1.6$_{\pm 0.3}$} & 83.8$_{\pm 0.3}$ & -2.9$_{\pm 0.3}$ & 94.3$_{\pm 0.2}$ & -2.8$_{\pm 0.2}$ & 10.9$_{\pm 0.1}$ & 1.4$_{\pm 0.1}$ \\
EXP  &  22.6$_{\pm 0.3}$ & -8.9$_{\pm 0.3}$ & 50.0$_{\pm 0.7}$ & -20.0$_{\pm 0.7}$ & 75.3$_{\pm 1.1}$ & -11.4$_{\pm 1.1}$ & 85.1$_{\pm 1.4}$ & -11.9$_{\pm 1.4}$ & 9.6$_{\pm 0.0}$ & 0.1$_{\pm 0.0}$ \\
EXP-Pool  &  \underline{31.5$_{\pm 0.0}$} & \underline{0.0$_{\pm 0.0}$} & \underline{70.0$_{\pm 0.0}$} & \underline{0.0$_{\pm 0.0}$} & \textbf{86.8$_{\pm 0.0}$} & \textbf{0.1$_{\pm 0.0}$} & \textbf{97.1$_{\pm 0.0}$} & \textbf{0.0$_{\pm 0.0}$} & 9.5$_{\pm 0.0}$ & 0.0$_{\pm 0.0}$ \\
ITS  &  27.8$_{\pm 0.6}$ & -3.7$_{\pm 0.6}$ & 60.7$_{\pm 1.4}$ & -9.3$_{\pm 1.4}$ & 76.1$_{\pm 1.9}$ & -10.6$_{\pm 1.9}$ & 84.7$_{\pm 2.3}$ & -12.3$_{\pm 2.3}$ & \textbf{9.1$_{\pm 0.0}$} & \textbf{-0.4$_{\pm 0.0}$} \\
ITS-Pool  &  31.5$_{\pm 0.0}$ & -0.0$_{\pm 0.0}$ & 69.9$_{\pm 0.0}$ & -0.0$_{\pm 0.0}$ & \underline{86.8$_{\pm 0.0}$} & \underline{0.1$_{\pm 0.0}$} & \underline{97.0$_{\pm 0.0}$} & \underline{0.0$_{\pm 0.0}$} & 9.8$_{\pm 0.0}$ & 0.3$_{\pm 0.0}$ \\
\bottomrule
\end{tabular}
}
}
\end{table}
}

{
\begin{table}[!htb]
\caption{Efficacy and robustness of different watermarking methods on OPT-1.3B evaluated with TPR@FPR=1\%. $\Delta$ is the performance boost brought by WaterPool. The best and second-best results are highlighted in \textbf{bold} and \underline{underline}.}
\label{tab:robustness-opt1.3b}
\renewcommand{\arraystretch}{1}
\setlength{\tabcolsep}{2pt} 
\scriptsize
\centering
\makebox[\textwidth][c]{
\resizebox{1\linewidth}{!}{
\begin{tabular}{l|rr|rr|rr|rr} 
    \toprule
    \multicolumn{1}{c}{~} & \multicolumn{2}{c}{w/o Attack} & \multicolumn{2}{c}{Lexical-Attack} & \multicolumn{2}{c}{Dipper-Attack} & \multicolumn{2}{c}{Translation-Attack} \\
    \midrule
    \multicolumn{1}{c}{~} & \multicolumn{1}{c}{value$\uparrow$} & \multicolumn{1}{c}{$\Delta$} & \multicolumn{1}{c}{value$\uparrow$} & \multicolumn{1}{c}{$\Delta$} & \multicolumn{1}{c}{value$\uparrow$} & \multicolumn{1}{c}{$\Delta$} & \multicolumn{1}{c}{value$\uparrow$} & \multicolumn{1}{c}{$\Delta$} \\
    \midrule
    \multicolumn{1}{c}{~} & \multicolumn{8}{c}{Open Text Generation} \\
    \midrule
Gamma  &  96.94$_{\pm 0.05}$ & \multicolumn{1}{c|}{-} & 17.91$_{\pm 0.28}$ & \multicolumn{1}{c|}{-} & 2.26$_{\pm 0.04}$ & \multicolumn{1}{c|}{-} & 3.25$_{\pm 0.10}$ & \multicolumn{1}{c}{-} \\
Delta  &  75.37$_{\pm 0.34}$ & \multicolumn{1}{c|}{-} & 8.58$_{\pm 0.27}$ & \multicolumn{1}{c|}{-} & 2.07$_{\pm 0.08}$ & \multicolumn{1}{c|}{-} & 2.91$_{\pm 0.11}$ & \multicolumn{1}{c}{-} \\
\midrule
Unigram  &  93.98$_{\pm 1.32}$ & \multicolumn{1}{c|}{-} & 89.69$_{\pm 3.67}$ & \multicolumn{1}{c|}{-} & 19.99$_{\pm 7.44}$ & \multicolumn{1}{c|}{-} & 35.35$_{\pm 8.74}$ & \multicolumn{1}{c}{-} \\
KGW  &  \textbf{98.43$_{\pm 0.08}$} & \multicolumn{1}{c|}{-} & 88.88$_{\pm 0.13}$ & \multicolumn{1}{c|}{-} & 15.05$_{\pm 0.32}$ & \multicolumn{1}{c|}{-} & 29.53$_{\pm 0.23}$ & \multicolumn{1}{c}{-} \\
KGW-Pool  &  98.29$_{\pm 0.01}$ & -0.15$_{\pm 0.08}$ & \underline{95.29$_{\pm 1.04}$} & 6.41$_{\pm 1.17}$ & \underline{24.62$_{\pm 2.01}$} & 9.57$_{\pm 2.32}$ & \underline{42.26$_{\pm 1.70}$} & 12.73$_{\pm 1.73}$ \\
\midrule
EXP  &  97.19$_{\pm 0.08}$ & \multicolumn{1}{c|}{-} & 93.48$_{\pm 0.09}$ & \multicolumn{1}{c|}{-} & 18.32$_{\pm 0.36}$ & \multicolumn{1}{c|}{-} & 31.14$_{\pm 0.33}$ & \multicolumn{1}{c}{-} \\
EXP-Pool  &  \underline{98.43$_{\pm 0.01}$} & 1.24$_{\pm 0.09}$ & \textbf{96.67$_{\pm 0.07}$} & 3.19$_{\pm 0.15}$ & \textbf{26.17$_{\pm 0.86}$} & 7.85$_{\pm 0.62}$ & \textbf{51.41$_{\pm 0.42}$} & 20.27$_{\pm 0.64}$ \\
\midrule
ITS  &  73.43$_{\pm 0.10}$ & \multicolumn{1}{c|}{-} & 26.23$_{\pm 0.19}$ & \multicolumn{1}{c|}{-} & 2.16$_{\pm 0.08}$ & \multicolumn{1}{c|}{-} & 3.56$_{\pm 0.08}$ & \multicolumn{1}{c}{-} \\
ITS-Pool  &  92.56$_{\pm 0.14}$ & 19.12$_{\pm 0.11}$ & 68.50$_{\pm 0.46}$ & 42.27$_{\pm 0.40}$ & 4.05$_{\pm 0.15}$ & 1.89$_{\pm 0.11}$ & 10.83$_{\pm 0.31}$ & 7.27$_{\pm 0.39}$ \\

    \midrule
    \multicolumn{1}{c}{~} & \multicolumn{8}{c}{Long-Form Question Answering} \\
    \midrule

Gamma  &  98.68$_{\pm 0.05}$ & \multicolumn{1}{c|}{-} & 21.20$_{\pm 0.36}$ & \multicolumn{1}{c|}{-} & 2.31$_{\pm 0.05}$ & \multicolumn{1}{c|}{-} & 5.18$_{\pm 0.21}$ & \multicolumn{1}{c}{-} \\
Delta  &  90.19$_{\pm 0.13}$ & \multicolumn{1}{c|}{-} & 12.17$_{\pm 0.10}$ & \multicolumn{1}{c|}{-} & 2.21$_{\pm 0.08}$ & \multicolumn{1}{c|}{-} & 4.96$_{\pm 0.02}$ & \multicolumn{1}{c}{-} \\
\midrule
Unigram  &  96.93$_{\pm 1.99}$ & \multicolumn{1}{c|}{-} & 92.47$_{\pm 3.77}$ & \multicolumn{1}{c|}{-} & 26.38$_{\pm 5.88}$ & \multicolumn{1}{c|}{-} & 43.17$_{\pm 6.98}$ & \multicolumn{1}{c}{-} \\
KGW  &  \underline{99.51$_{\pm 0.01}$} & \multicolumn{1}{c|}{-} & 94.12$_{\pm 0.06}$ & \multicolumn{1}{c|}{-} & 19.21$_{\pm 0.17}$ & \multicolumn{1}{c|}{-} & 46.62$_{\pm 0.43}$ & \multicolumn{1}{c}{-} \\
KGW-Pool  &  99.51$_{\pm 0.00}$ & -0.01$_{\pm 0.01}$ & \underline{97.97$_{\pm 0.04}$} & 3.85$_{\pm 0.08}$ & \underline{29.92$_{\pm 1.04}$} & 10.71$_{\pm 1.20}$ & 50.14$_{\pm 0.24}$ & 3.52$_{\pm 0.37}$ \\
\midrule
EXP  &  99.17$_{\pm 0.06}$ & \multicolumn{1}{c|}{-} & 97.56$_{\pm 0.08}$ & \multicolumn{1}{c|}{-} & 27.92$_{\pm 0.56}$ & \multicolumn{1}{c|}{-} & \underline{54.99$_{\pm 0.37}$} & \multicolumn{1}{c}{-} \\
EXP-Pool  &  \textbf{99.56$_{\pm 0.04}$} & 0.40$_{\pm 0.08}$ & \textbf{98.81$_{\pm 0.07}$} & 1.25$_{\pm 0.05}$ & \textbf{36.24$_{\pm 0.82}$} & 8.32$_{\pm 1.37}$ & \textbf{72.61$_{\pm 0.29}$} & 17.62$_{\pm 0.56}$ \\
\midrule
ITS  &  86.40$_{\pm 0.51}$ & \multicolumn{1}{c|}{-} & 38.23$_{\pm 0.46}$ & \multicolumn{1}{c|}{-} & 3.02$_{\pm 0.19}$ & \multicolumn{1}{c|}{-} & 8.19$_{\pm 0.17}$ & \multicolumn{1}{c}{-} \\
ITS-Pool  &  97.56$_{\pm 0.05}$ & 11.16$_{\pm 0.46}$ & 81.73$_{\pm 0.34}$ & 43.51$_{\pm 0.21}$ & 6.25$_{\pm 0.15}$ & 3.23$_{\pm 0.05}$ & 24.26$_{\pm 0.43}$ & 16.07$_{\pm 0.27}$ \\
    
\bottomrule
\end{tabular}
}
}
\vspace{-15pt}
\end{table}
}

Results of WaterPool on OPT-1.3b are presented in Table \ref{tab:imperceptibility-opt1.3b} and \ref{tab:robustness-opt1.3b}. Results on OPT-6.7b are provided in Appendix \ref{app:additional-experiment}. Each experiment is repeated for three times.
Most of watermarking techniques achieve great performance in efficacy (more than 90\% TPR@FPR=1\%).
Consistent with our theoretical analysis in Section \ref{sec:trade-off-under-key-centered-framework}, Unigram, EXP, and ITS restrict the key space $\Xi$, thereby compromising the independence of keys across different generations. This is evidenced by the weak performance on Distinct-$N$ metric. On the other hand, Gamma and Delta achieve optimal imperceptibility 
but at the expense of efficacy and robustness.
WaterPool addresses these trade-offs to a significant extent. The results show that WaterPool markedly improves the original watermarking techniques across both tasks and all three properties. It elevates the imperceptibility of original watermarks to near-optimal levels, as indicated by the minimal difference from non-watermarked texts. Additionally, it consistently enhances the efficacy and robustness of the original techniques, as demonstrated by the substantial improvements in the TPR@FPR=1\% metric.

\subsection{Real-world Challenges for WaterPool}

As stated in Section \ref{sec:diff-from-retrieval-watermark}, WaterPool only requires to retrieve golden private key if it exists. This assertion holds intuitively, as a watermarked text, even under attacks, should remain semantically closer to the original watermarked text stored in the database than other texts. Otherwise, it should not be considered a modified version of the original watermarked text. To empirically demonstrate this, we conduct various experiments to show WaterPool's stability under real-world challenges.

\paragraph{Performance with diverse negative samples.} We test WaterPool's performance with different non-watermarked text distributions, including human-written outputs (\textit{Human}, 3K samples) and other non-watermarked models' outputs\footnote{These outputs are identical to non-watermarked texts used in Section \ref{sec:diff-from-retrieval-watermark}. Please refer to Appendix \ref{app:retrieval-watermark-problem}.} (\textit{Other Models}, 1.8M samples). Specifically, we utilize six models: Gemma-2b, Gemma-7b, Llama2-7b, Llama2-13b, Vicuna-7b, Vicuna-13b to generate responses for the same prompts. Results shown in Table \ref{tab:different-neg-samples} indicate that all WaterPool methods exhibit stable performance regardless of the number and types of negative samples, which aligns with the theoretical analysis above.

\vspace{-5pt}
\paragraph{Scalability with large vector databases.} 
We also conducted experiments to scale up the vector database size, simulating real-world scenarios. Specifically, we augment the database with noisy entries by incorporating 50-token fragments from the C4 dataset, constructing a noisy database of about 100 million items. During detection, if a noisy item is retrieved, a random key will be used for statistic calculation, hence affecting WaterPool's performance. This setting is extremely challenging since the noisy database shares the similar distribution with watermarked texts, both from C4 dataset. Results presented in Figure \ref{fig:scaling-noisy-database} and Appendix \ref{app:scaling-noisy-database}, demonstrate that performance of WaterPool remains robust as database size increases exponentially. It further indicates the feasibility of our proposed WaterPool in real world scenarios.

\vspace{-5pt}
\subsection{Analysis of Space and Time}
\vspace{-5pt}

The price for breaking trade-offs among imperceptibility, efficacy, and robustness with WaterPool lies in its space usage. The space complexity of WaterPool is $O(N)$, growing linearly with the number of generations. However, this is still practical in real-world scenarios. In our experiments, we use sentence embeddings in form of 128-dimensional bfloat16 arrays and a private key of one int32 number\footnote{In practice, we can only sample a number as seed to initialize a pseudo-random number generator, thereby generating the whole private key vector $\rvk\in\R^L$ for all tokens.}. Given ChatGPT's monthly visits are about 2B per month\footnote{\url{https://explodingtopics.com/blog/chatgpt-users}}, it takes 260 bytes of space per item, resulting in about 520 GB of storage per month, which is certainly manageable nowadays.

Regarding time complexity, the main cost of WaterPool comes from retrieval. It requires less than 0.001 sec per item when conducting retrieval on a database of 100M items with 10 NVIDIA GeForce RTX 3090 GPUs. It is worth noting that retrieval is a well-established field with various acceleration techniques available, which we have not explored in this paper. Additionally, watermark detection is not typically a time-intensive application. After all, we think WaterPool is highly practical in real world deployments.

\begin{figure}[!tp]
\centering
{
\begin{minipage}[t]{0.62\textwidth}
\renewcommand{\arraystretch}{1.2}
\setlength{\tabcolsep}{0.5pt} 
\centering
\captionof{table}{TPR@FPR=1\% of WaterPool with different non-watermarked texts in form of \lstinline|(C4 Result/LFQA Result)|. The first column lists watermarking methods, and the second column shows non-watermarked text sources. WaterPool remains stable across different non-watermarked texts.}
\label{tab:different-neg-samples}
\makebox[\textwidth][c]{
\resizebox{\linewidth}{!}{
    \centering
    \begin{tabular}{cc|cccc}
    \toprule
     ~ & ~ & ~~~w/o Attack~~~~ & ~Lexical-Attack~ & ~Dipper-Attack~ & Translation-Attack\\
    \midrule
\multirow{3}{*}{EXP-Pool}  &  Original  &  98.43 / 99.56 & 96.67 / 98.81 & 26.17 / 36.24 & 51.41 / 72.61 \\
 &  Human  &  98.35 / 99.66 & 96.50 / 99.04 & 24.49 / 39.70 & 50.23 / 75.00 \\
 &  Other Models  &  98.45 / 99.61 & 96.72 / 98.88 & 25.65 / 36.63 & 51.60 / 72.51 \\
 \midrule
\multirow{3}{*}{KGW-Pool}  &  Original  &  98.29 / 99.51 & 95.29 / 97.97 & 24.62 / 29.92 & 42.26 / 50.14 \\
 &  Human  &  98.96 / 99.98 & 96.27 / 99.56 & 27.23 / 46.80 & 46.22 / 69.24 \\
 &  Other Models  &  98.96 / 99.71 & 96.21 / 98.03 & 26.33 / 30.52 & 45.42 / 51.66 \\
 \midrule
\multirow{3}{*}{ITS-Pool}  &  Original  &  92.56 / 97.56 & 68.50 / 81.73 & 4.05 / 6.25 & 10.83 / 24.26 \\
 &  Human  &  93.33 / 97.44 & 71.18 / 80.77 & 4.61 / 5.82 & 12.01 / 22.95 \\
 &  Other Models  &  92.73 / 97.65 & 69.27 / 82.22 & 4.15 / 6.62 & 11.17 / 24.92 \\
\bottomrule
    \end{tabular}
}
}
\end{minipage}
}%
\hfill
{
\begin{minipage}[t]{0.33\textwidth}
\centering
\caption{TPR@FPR=1\% of WaterPool with different database size on open-ended generation.}
\label{fig:scaling-noisy-database}
\vspace{-5pt}
\includegraphics[width=\linewidth]{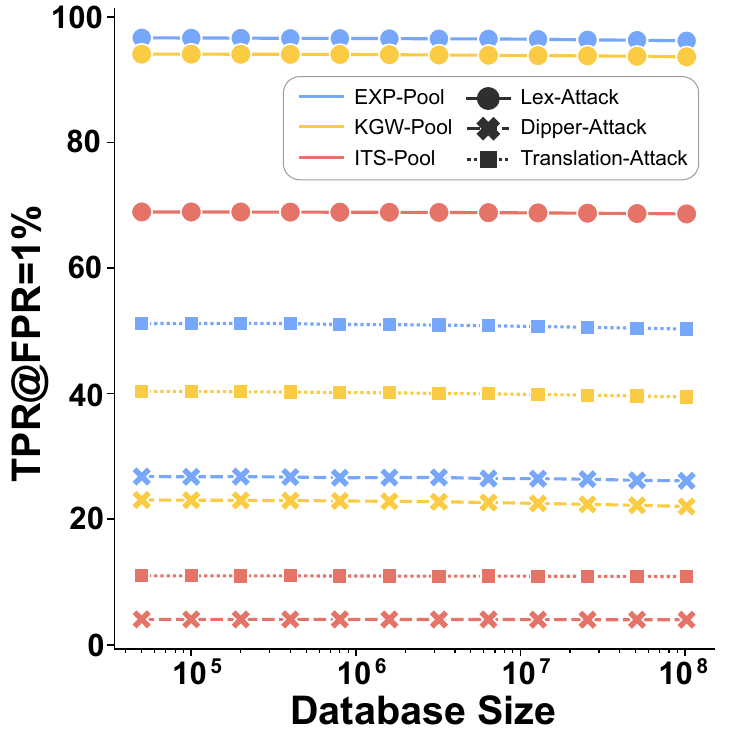}
\end{minipage}
}
\vspace{-15pt}
\end{figure}

\vspace{-5pt}
\section{Related Work}
\vspace{-5pt}

Significant progress has been made in the field of text watermarking, particularly focusing on private watermarking methods integrated into the text generation process of large language models (LLMs), which require no additional training. KGW \citep{kirchenbauerWatermarkLargeLanguage2023} is the first watermarking in the era of LLM, using context windows and hashing as private keys to increase probabilities of a random partition of vocabulary by adding their logits. Building on KGW, many different methods made some improvements, such as proposing different hash functions \citep{kirchenbauerReliabilityWatermarksLarge2023,houSemStampSemanticWatermark2023,liuSemanticInvariantRobust2023,renRobustSemanticsbasedWatermark2023}, heuristic partition strategies \citep{liImprovingGenerationQuality2023,chenXMarkLosslessWatermarking2023}, embedding multi-bit messages \citep{wangCodableTextWatermarking2023}, and robust hypothesis testing techniques \citep{fernandezThreeBricksConsolidate2023}.
Unigram \citep{zhaoProvableRobustWatermarking2023} stands out for fixing the private key for better efficacy and robustness. On the contrary, Gamma and Delta\citep{huUnbiasedWatermarkLarge2023} focus on imperceptibility with improved probability modification mechanisms.
Another notable work by \citet{christUndetectableWatermarksLanguage2023}  examines LLM generation from a bit-level perspective, modifying the distribution through inverse transform sampling. 
\citet{kuditipudiRobustDistortionfreeWatermarks2023} propose two similar techniques, ITS and EXP, which apply inverse transform sampling and Gumbel-max sampling at the token level. Despite these advancements, the trade-off among imperceptibility, efficacy, and robustness has been widely recognized and remains unresolved. A concurrent work \citep{giboulotWaterMaxBreakingLLM2024} also tries to break the trade-off by resampling until significant watermarking signals are observed. Although this approach maintains imperceptibility in a single turn, it significantly alters the $N$-shot output distribution.

\vspace{-5pt}
\section{Conclusion and Future Work}
\vspace{-5pt}

In this paper, we focus on the trade-off challenges among imperceptibility, efficacy and robustness in language model watermarking. Through a key-centered scheme, we have identified that these trade-offs arise from the conflict between the scale of the key sampling space during generation and the complexity of key restoration during detection. We propose WaterPool, a novel key module utilizing semantic search to alleviate this conflict. WaterPool can integrate with most existing watermarks, significantly enhancing their performance across all three properties and thus mitigating the trade-off issue.
Although WaterPool is simple and easy to deploy, it does not completely eliminate the trade-offs. It serves as an effective key module towards this goal, but the mark module is also crucial. We introduce the unbiased condition and statistical difference to outline the requirements for mark modules. Specific mark module designs are not thoroughly discussed in this work, as it is not our main focus. Future research could explore better mark module, further advancing watermarking techniques towards resolving the trade-offs among imperceptibility, efficacy, and robustness.



\bibliographystyle{plainnat}
\bibliography{ref,watermark}

\newpage
\appendix

\clearpage
\section{Algorithms of WaterPool}
\label{app:pseudo-code}

In this section, we present the pseudo codes for EXP-Pool, ITS-Pool and KGW-Pool. We highlight the codes WaterPool invokes in with triangle comments ($\triangleright$), while the other parts remain unchanged from the original watermarking techniques. The invocation of WaterPool is little, demonstrating the ease of its integration with other existing watermarking methods. The specific implementation of modification function $F$ and per-token statistic $S$ used in the pseudo codes can be found in Table \ref{tab:prior-watermark-design-mark}.

\algnewcommand{\LineComment}[1]{\State {\footnotesize\textcolor{gray}{/*  \texttt{#1}  */} }}

\vspace{-5pt}
\begin{algorithm}[!htb]
    \caption{EXP-Pool Generation}
    \label{alg:exp-pool-gen}
    \textbf{Params}: language model $M$, max output length $L$, reweight function of EXP $F_{exp}(\cdot,\cdot)$, embedding model $Enc(\cdot)$.\\
    \textbf{Input}: $N$ rounds queries $\{\rvx^n\}^N_{n=1}$.\\
    \textbf{Output}: $N$ rounds outputs $\{\rvy^n\}^N_{n=1}$, vector database $D$.
    \begin{algorithmic}[1]
        \State $D \gets \{\}$\Comment{Initialize vector database}
        \LineComment{Multi-round queries}
        \For{$n\in\{1,...,N\}$}
            \State Input current round prompt $\rvx^n$
            \State $\rvy^n\gets \text{empty string}$
            \State $\rvk^n\sim\mathcal U(\R^L)$ \Comment{Sample key}
            \LineComment{Auto-regressive generation}
            \For{$i\in1,...,L$}
                \State $P_i\gets P_M(\cdot|\rvx^n,\rvy^n_{<i})$
                \State $\hat P_i\gets F_{exp}(\rvk^n_i,P_i)$
                \State $\rvy^n_i\sim \hat P_i$
            \EndFor
            \State $D\gets D \cup \{(Enc(\rvy^n),\rvk^n)\}$ \Comment{Store key for detection}
            \State Output current round generation $\rvy^n$
        \EndFor
    \end{algorithmic} 
\end{algorithm}
\vspace{-3pt}

\begin{algorithm}[!htb]
    \caption{EXP-Pool Detection}
    \label{alg:exp-pool-detect}
    \textbf{Params}: vector database $D=\{(Enc(\rvy^n),\rvk^n)\}_n^N$, embedding model $Enc(\cdot)$, permutation resample times $T$, edit penalty $\eta$, per-token statistic of EXP $S_{exp}(\cdot,\cdot)$.\\
    \textbf{Input}: candidate text $\hat\rvy$\\
    \textbf{Output}: $p$-value of being watermarked $\hat p$
    \begin{algorithmic}[1]
        \LineComment{Aggregation of per-token statistic $s_i$ with edit distance trick}
        \Procedure{$d_{\textnormal{edit}}$}{$\rvk, \rvy$}:
            \If{$\text{len}( \rvk) = 0$}
                \State\Return{$-\eta\cdot \text{len}( \rvy)$}
            \ElsIf{$\text{len}( \rvy) = 0$}
                \State\Return{$-\eta\cdot \text{len}( \rvk)$}
            \Else
                \State $s_i\gets S_{exp}( \rvk_1, \rvy_2)$
                \State\Return{$\max\{
                d_{\text{edit}}( \rvk_{2:}, \rvy_{2:})+s_i,
                d_{\text{edit}}( \rvk_{2:}, \rvy)-\eta,
                d_{\text{edit}}( \rvk, \rvy_{2:})-\eta
                \}$}
            \EndIf
        \EndProcedure
        \Statex
        \State $n^*\gets\argmax_n\text{sim}(Enc(\rvy^n),Enc(\hat\rvy))$ \Comment{Retrieve key from vector database}
        \State $\hat\rvk\gets\rvk^{n^*}$
        \State $\hat V\gets d_{edit}(\hat\rvk,\hat\rvy)$
        \LineComment{Permutation test}
        \For{$t\in 1,...,T$}
            \State $\rvk^t\sim\mathcal{U}(\R^L)$
            \State $V^t\gets d_{edit}(\rvk^t,\hat\rvy)$
        \EndFor
        \LineComment{Calculate $p$-value}
        \State $\hat p\gets\frac{1}{T+1}\left(1+\sum_t \1_{\hat V>V^t}\right)$
    \end{algorithmic}
\end{algorithm}

\begin{algorithm}[!htb]
    \caption{ITS-Pool Generation}
    \label{alg:its-pool-gen}
    \textbf{Params}: language model $M$, max output length $L$, reweight function of ITS $F_{its}(\cdot,\cdot)$, embedding model $Enc(\cdot)$.\\
    \textbf{Input}: $N$ rounds queries $\{\rvx^n\}^N_{n=1}$.\\
    \textbf{Output}: $N$ rounds outputs $\{\rvy^n\}^N_{n=1}$, vector database $D$.
    \begin{algorithmic}[1]
        \State $D \gets \{\}$\Comment{Initialize vector database}
        \LineComment{Multi-round queries}
        \For{$n\in\{1,...,N\}$}
            \State Input current round prompt $\rvx^n$
            \State $\rvy^n\gets \text{empty string}$
            \State $\rvk^n\sim\mathcal U(\R^L)$ \Comment{Sample key}
            \LineComment{Auto-regressive generation}
            \For{$i\in1,...,L$}
                \State $P_i\gets P_M(\cdot|\rvx^n,\rvy^n_{<i})$
                \State $\hat P_i\gets F_{its}(\rvk^n_i,P_i)$
                \State $\rvy^n_i\sim \hat P_i$
            \EndFor
            \State $D\gets D \cup \{(Enc(\rvy^n),\rvk^n)\}$ \Comment{Store key for detection}
            \State Output current round generation $\rvy^n$
        \EndFor
    \end{algorithmic} 
\end{algorithm}

\begin{algorithm}[!htb]
    \caption{ITS-Pool Detection}
    \label{alg:its-pool-detect}
    \textbf{Params}: vector database $D=\{(Enc(\rvy^n),\rvk^n)\}_n^N$, embedding model $Enc(\cdot)$, permutation resample times $T$, edit penalty $\eta$, per-token statistic of ITS $S_{its}(\cdot,\cdot)$.\\
    \textbf{Input}: candidate text $\hat\rvy$\\
    \textbf{Output}: $p$-value of being watermarked $\hat p$
    \begin{algorithmic}[1]
        \LineComment{Aggregation of per-token statistic $s_i$ with edit distance trick}
        \Procedure{$d_{\textnormal{edit}}$}{$\rvk, \rvy$}:
            \If{$\text{len}( \rvk) = 0$}
                \State\Return{$-\eta\cdot \text{len}( \rvy)$}
            \ElsIf{$\text{len}( \rvy) = 0$}
                \State\Return{$-\eta\cdot \text{len}( \rvk)$}
            \Else
                \State $s_i\gets S_{its}( \rvk_1, \rvy_2)$
                \State\Return{$\max\{
                d_{\text{edit}}( \rvk_{2:}, \rvy_{2:})+s_i,
                d_{\text{edit}}( \rvk_{2:}, \rvy)-\eta,
                d_{\text{edit}}( \rvk, \rvy_{2:})-\eta
                \}$}
            \EndIf
        \EndProcedure
        \Statex
        \State $n^*\gets\argmax_n\text{sim}(Enc(\rvy^n),Enc(\hat\rvy))$ \Comment{Retrieve key from vector database}
        \State $\hat\rvk\gets\rvk^{n^*}$
        \State $\hat V\gets d_{edit}(\hat\rvk,\hat\rvy)$
        \LineComment{Permutation test}
        \For{$t\in 1,...,T$}
            \State $\rvk^t\sim\mathcal{U}(\R^L)$
            \State $V^t\gets d_{edit}(\rvk^t,\hat\rvy)$
        \EndFor
        \LineComment{Calculate $p$-value}
        \State $\hat p\gets\frac{1}{T+1}\left(1+\sum_t \1_{\hat V>V^t}\right)$
    \end{algorithmic}
\end{algorithm}

\begin{algorithm}[!htb]
    \caption{KGW-Pool Generation}
    \label{alg:kgw-pool-gen}
    \textbf{Params}: language model $M$, max output length $L$, reweight function of KGW $F_{kgw}(\cdot,\cdot)$, embedding model $Enc(\cdot)$.\\
    \textbf{Input}: $N$ rounds queries $\{\rvx^n\}^N_{n=1}$.\\
    \textbf{Output}: $N$ rounds outputs $\{\rvy^n\}^N_{n=1}$, vector database $D$.
    \begin{algorithmic}[1]
        \State $D \gets \{\}$\Comment{Initialize vector database}
        \LineComment{Multi-round queries}
        \For{$n\in\{1,...,N\}$}
            \State Input current round prompt $\rvx^n$
            \State $\rvy^n\gets \text{empty string}$
            \State $k^n\sim\mathcal U(\R)$ \Comment{Sample key}
            \LineComment{Auto-regressive generation}
            \For{$i\in1,...,L$}
                \State $P_i\gets P_M(\cdot|\rvx^n,\rvy^n_{<i})$
                \State $\hat P_i\gets F_{kgw}(k^n,P_i)$
                \State $\rvy^n_i\sim \hat P_i$
            \EndFor
            \State $D\gets D + \{(Enc(\rvy^n),k^n)\}$ \Comment{Store key for detection}
            \State Output current round generation $\rvy^n$
        \EndFor
    \end{algorithmic} 
\end{algorithm}

\begin{algorithm}[!htb]
    \caption{KGW-Pool Detection}
    \label{alg:kgw-pool-detect}
    \textbf{Params}: vector database $D=\{(Enc(\rvy^n),k^n)\}$, embedding model $Enc(\cdot)$, per-token statistic of KGW $S_{kgw}(\cdot,\cdot)$ \\
    \textbf{Input}: candidate text $\hat\rvy$\\
    \textbf{Output}: $p$-value of being watermarked $\hat p$
    \begin{algorithmic}[1]
        \State $n^*\gets\argmax_n\text{sim}(Enc(\rvy^n),Enc(\hat\rvy))$ \Comment{Retrieve key from vector database}
        \State $\hat k \gets k^{n^*}$
        \LineComment{Aggregation of per-token statistic $s_i$ via summation}
        \For{$i\in 1,...,\text{len}(\hat\rvy)$}
            \State $s_i\gets S_{kgw}(\hat k, \hat\rvy_i)$
        \EndFor
        \LineComment{Calculation of $z$-score}
        \State $z\gets \sum_i s_i$
        \LineComment{Calculation of $p$-value}
        \State $p\gets 1-\Phi(z)$
    \end{algorithmic}
\end{algorithm}

\clearpage
\section{Theoretical Proofs}




\subsection{Proof of Proposition \ref{prop:imperceptiblity-requirements}}

\begin{proof}
Recall that the imperceptibility is defined as 
\begin{equation*}
    \prod_{i,n}P_M(\rvy_i^n|\rvx^n,\rvy_{<i}^n)=\E_{\rvk^1,...,\rvk^N}[\prod_{i,n} F(\rvk_i^n,P_i^n)(\rvy_i^n)],\, \forall \rvx^n,\rvy^n\in\Sigma^*
\end{equation*}
Given that (1) $\rvk^1,...,\rvk^N\overset{i.i.d}{\sim}\mathcal{U}(\R^L)$; (2) $P_M(\cdot |\rvx^n,\rvy_{<i}^n) = \E_{\rvk_i\sim\mathcal U(\R)}[F(\rvk_i,P_i)]$. We have,
\begin{align*}
    \text{RHS} &= \E_{\rvk^1\sim\mathcal{U}(\R^L)}...\E_{\rvk^N\sim\mathcal{U}(\R^L)}[\prod_{i,n} F(\rvk_i^n,P_i^n)(\rvy_i^n)]\\
    &=\prod_n\E_{\rvk^n\sim\mathcal{U}(\R^L)}[\prod_iF(\rvk_i^n,P_i^n)(\rvy_i^n)]\\
    &=\prod_n(\E_{\rvk_1^n\sim\mathcal U(\R)}...\E_{\rvk_L^n\sim\mathcal U(\R)}[\prod_iF(\rvk_i^n,P_i^n)(\rvy_i^n)])\\
    &=\prod_n(\prod_i\E_{\rvk_i^n\sim\mathcal U(\R)}[F(\rvk_i^n,P_i^n)(\rvy_i^n)])\\
    &=\prod_{n,i}P_M(\rvy_i^n|\rvx^n,\rvy^n_{<i})=\text{LHS}
\end{align*}
Where the first four steps utilize the mutual independence among private keys, while the last step is based on the second condition.
\end{proof}


\subsection{Proof of Proposition \ref{prop:waterpool-imperceptibility} (EXP-Pool's Imperceptibility)}
\label{app:exp-imperceptibility}

We first recall the distribution modification process of EXP-Pool (i.e. $F_{exp}(\rvk_i,P_i)$). Given a private key $\rvk_i$ as seed, a standard Gumbel vector $\rvg_{\rvk_i}\in\R^{|\Sigma|}$ is sampled. EXP-Pool then conducts a Gumbel-max sampling on the next token distribution $P_i(\cdot):=P_M(\cdot|\rvx^n,\rvy_{<i}^n)$ via $\rvg_{\rvk_i}$ to sample an output token $t^*$. The degenerate distribution of $t^*$ is then returned.

\begin{lemma}
    The mark module of EXP-Pool satisfies the unbiased condition, i.e.
    \begin{equation*}
        P_M(\cdot |\rvx^n,\rvy_{<i}^n) = \E_{\rvk_i\sim\mathcal U(\R)}[F_{exp}(\rvk_i,P_i)]
    \end{equation*}
\end{lemma}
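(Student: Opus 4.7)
The plan is to establish the lemma by invoking the classical Gumbel-max trick. Specifically, I would first unpack the definition of $F_{exp}(\rvk_i, P_i)$ from Table \ref{tab:prior-watermark-design-mark}: when $\rvk_i \sim \mathcal{U}(\R)$, it serves as a seed that produces a vector $\rvg_{\rvk_i} \in \R^{|\Sigma|}$ whose coordinates are i.i.d.\ standard Gumbel, and then $F_{exp}(\rvk_i, P_i) = \delta(t^*)$ where $t^* = \arg\max_{t \in \Sigma} \log P_i(t) + \rvg_{\rvk_i, \text{ord}(t)}$.

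Next, for any fixed $t \in \Sigma$, I would compute
\begin{equation*}
    \E_{\rvk_i \sim \mathcal{U}(\R)}[F_{exp}(\rvk_i, P_i)(t)] = \Pr_{\rvg}\!\left[\arg\max_{t' \in \Sigma} \log P_i(t') + \rvg_{\text{ord}(t')} = t\right],
\end{equation*}
and then apply the Gumbel-max identity, which states that for any real scores $\{s_j\}$ and i.i.d.\ standard Gumbel noise $\{g_j\}$, $\Pr[\arg\max_j (s_j + g_j) = j^*] = \exp(s_{j^*})/\sum_j \exp(s_j)$. Plugging $s_j = \log P_i(j)$ collapses the denominator to $\sum_j P_i(j) = 1$ and yields exactly $P_i(t) = P_M(t \mid \rvx^n, \rvy_{<i}^n)$, which is the required unbiased condition.

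The only step that requires care is justifying that $\rvk_i \sim \mathcal{U}(\R)$ genuinely induces an i.i.d.\ standard Gumbel vector $\rvg_{\rvk_i}$; this follows from the construction in Table \ref{tab:prior-watermark-design-mark} where the uniform seed is passed through a pseudo-random generator whose output is then transformed coordinate-wise by inverse-CDF sampling from the Gumbel distribution, so I would note this briefly and then treat $\rvg_{\rvk_i}$ as a genuine i.i.d.\ Gumbel vector for the remainder of the argument. If a self-contained justification of the Gumbel-max identity is desired, I would include a one-line derivation based on the CDF calculation $\Pr[\max_{j \neq j^*}(s_j + g_j) \leq s_{j^*} + g_{j^*}]$ integrated against the Gumbel density of $g_{j^*}$. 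The main obstacle is essentially bookkeeping rather than mathematical: making sure that the "$F_{exp}$ returns a degenerate distribution" viewpoint and the "probability of $t^* = t$" viewpoint are aligned so that the expectation over $\rvk_i$ correctly reduces to the Gumbel-max probability.
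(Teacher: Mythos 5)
Your proposal is correct and takes essentially the same route as the paper: both reduce the unbiased condition to the Gumbel-max identity, and the paper's proof is exactly the ``CDF calculation'' you sketch as an optional self-contained derivation (writing $u_t = \exp(-\exp(-g_t)) \sim \mathcal{U}(0,1)$ and integrating). The paper simply carries out that derivation in full rather than citing the identity as known.
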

\begin{proof}
    For simplicity, we denote $P_M(t |\rvx^n,\rvy_{<i}^n)$ as $P_i(t)$ and $g_t$ as the Gumbel variable in $\rvg_{\rvk_i}$ corresponding to the token $t$. Since 
    The lemma holds if and only if for any token $t\in\Sigma$,
    \begin{equation*}
        P_M(t |\rvx^n,\rvy_{<i}^n) = \E_{\rvg_{\rvk_i,j}\overset{i.i.d}{\sim}\text{Gumbel}(0,1)}[\1_{\log P_i(t)+g_t\geq \log P_i(t')+g_{t'},\forall t'\in\Sigma}]
    \end{equation*}
    This equation follows as
    \begin{align*}
        \text{RHS}&=P(\log P_i(t)+g_t\geq \log P_i(t')+g_{t'},\forall t'\in\Sigma) &&\\
        &=P(\exp(-\exp(-g_{t'}))\leq \exp(-\exp(-g_t))^{P_i(t')/P_i(t)},\forall t'\in\Sigma) &&\\
        &=\int_0^1 P(u_{t'}\leq u_t^{P_i(t')/P_i(t)},\forall t'\in\Sigma|u_t)p(u_t)du_t\\
        &=\int_0^1 \prod_{t'\in\Sigma}P(u_{t'}\leq u_t^{P_i(t')/P_i(t)}|u_t)p(u_t)du_t\\
        &=\int_0^1 u_t^{\sum_{t'} P_i(t')/P_i(t)}du_t &&\\
        &=P_i(t) = \text{LHS} &&\\
    \end{align*}
    ,where $u_t:=\exp(-\exp(-g_{t}))$. We have $u_t\sim\mathcal U(0,1)$, since $g_t\sim\text{Gumbel}(0,1)$.
\end{proof}

Given the unbiased condition guaranteed by this lemma and the independent condition guaranteed by WaterPool, we can immediately have the imperceptibility of EXP-Pool according to Proposition \ref{prop:imperceptiblity-requirements}.

\subsection{Proof of Proposition \ref{prop:waterpool-imperceptibility} (ITS-Pool's Imperceptibility)}
\label{app:its-imperceptibility}

We first recall the distribution modification process of ITS-Pool (i.e. $F_{its}(\rvk_i,\rvy_i)$). Given a private key $\rvk_i$ as the random seed, a random permutation $\pi_{\rvk_i}:\Sigma\rightarrow\Sigma$ and a uniform variable $\ru_{\rvk_i}\sim\mathcal U([0,1])$ are sampled. ITS conducts an inverse transform sampling on the permuted distribution $P^\text{perm}$ via $\ru$. The sampled token $t$ is transformed back to $t^*$ via inverse permutation $\pi_{\rvk_i}^{-1}$. The degenerate distribution of $t^*$ is then returned.

\begin{lemma}
    The mark module of ITS satisfies the unbiased condition, i.e.
    \begin{equation*}
        P_M(\cdot |\rvx^n,\rvy_{<i}^n) = \E_{\rvk_i\sim\mathcal U(\R)}[F_{its}(\rvk_i,P_i)]
    \end{equation*}
\end{lemma}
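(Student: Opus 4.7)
The plan is to decompose the randomness encoded by $\rvk_i$ into two components---the random permutation $\pi_{\rvk_i}$ of the vocabulary and the uniform seed $\ru_{\rvk_i}\sim\mathcal U([0,1])$---and then apply the tower rule, conditioning first on $\pi_{\rvk_i}$. Because $F_{its}(\rvk_i,P_i)$ is the point mass at the sampled token $t^*$, the unbiased condition reduces to verifying that, for every $t\in\Sigma$,
\begin{equation*}
    P_i(t)\;=\;\E_{\pi_{\rvk_i},\,\ru_{\rvk_i}}\bigl[\1_{\{t^*=t\}}\bigr].
\end{equation*}

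First I would fix an arbitrary permutation $\pi$ and analyse the conditional probability that the inverse-transform sampler followed by the back-permutation $\pi^{-1}$ produces the specific token $t$. By the ITS construction, $t^*=t$ is equivalent to $z^*=\text{ord}(\pi(t))$, which is in turn equivalent to the uniform seed landing in the half-open interval
\begin{equation*}
    \ru_{\rvk_i}\in\Bigl[\sum_{\text{ord}(s)<\text{ord}(\pi(t))} P_i^{\text{perm}}(s),\;\sum_{\text{ord}(s)\leq\text{ord}(\pi(t))} P_i^{\text{perm}}(s)\Bigr).
\end{equation*}
Since $P_i^{\text{perm}}(s)=P_i(\pi^{-1}(s))$ and $\pi$ is a bijection on $\Sigma$, this interval has length exactly $P_i(\pi^{-1}(\pi(t)))=P_i(t)$. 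Because $\ru_{\rvk_i}\sim\mathcal U([0,1])$, the conditional probability $\Pr(t^*=t\mid \pi_{\rvk_i}=\pi)$ equals $P_i(t)$, a value that does not depend on $\pi$.

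Taking the outer expectation over $\pi_{\rvk_i}$ then gives $\E[\1_{\{t^*=t\}}]=P_i(t)$, which is precisely the unbiased condition. Combined with the independent condition that WaterPool's i.i.d.\ key sampling already supplies, Proposition \ref{prop:imperceptiblity-requirements} immediately yields the imperceptibility of ITS-Pool in the same way as for EXP-Pool in Appendix \ref{app:exp-imperceptibility}.

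The main obstacle I anticipate is bookkeeping rather than mathematics: one must justify that a single seed $\rvk_i\sim\mathcal U(\R)$ can be used to produce the pair $(\pi_{\rvk_i},\ru_{\rvk_i})$ as \emph{mutually independent} components, with $\pi_{\rvk_i}$ uniform on the symmetric group over $\Sigma$. This follows from standard measure-theoretic isomorphisms between atomless probability spaces (or, operationally, by splitting the bits of the seed), but the proof should exhibit this decomposition explicitly so that the symmetric-group uniformity and the independence from $\ru_{\rvk_i}$ are unambiguous; after that, the interval-length computation above applies verbatim.
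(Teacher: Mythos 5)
Your proof is correct and follows essentially the same route as the paper's: condition on the permutation $\pi_{\rvk_i}$, identify the event $\{t^* = t\}$ with the uniform seed $\ru_{\rvk_i}$ landing in an interval of length $P_i^{\text{perm}}(\pi(t)) = P_i(t)$, and note that this conditional probability is independent of $\pi$. The paper's version is terser---it states the interval equality directly without spelling out the chain $t^*=t \Leftrightarrow z^*=\text{ord}(\pi(t))$ or the measure-theoretic point about splitting a single real seed into independent $(\pi_{\rvk_i},\ru_{\rvk_i})$ components, both of which you make explicit.
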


\begin{proof}
The lemma follows if that given any permutation $\pi$ and any output token $t^*$,
\begin{equation*}
    P_M(\pi(t^*)|\rvx^n,\rvy_{<i}^n) = \E_{\ru_{\rvk_i}\sim\mathcal U([0,1])}[\1_{\pi(t^*)\text{ is sampled via inverse transform sampling}}]
\end{equation*}
This equation certainly holds because of the definition of inverse transform sampling, i.e.
\begin{align*}
    \text{RHS} =& P(\ru_{\rvk_i}\in
    [
    P_M(\{t':\text{ord}(t')<\text{ord}\circ\pi(t^*)\}|\rvx^n,\rvy_{<i}^n),\\
    &~~~~~~~~~~~~~~~~P_M(\{t':\text{ord}(t')\leq\text{ord}\circ\pi(t^*)\}|\rvx^n,\rvy_{<i}^n)
    \,]\,)\\
    =& P_M(\pi(t^*)|\rvx^n,\rvy_{<i}^n) = \text{LHS}
\end{align*}
,where $ord:\Sigma\rightarrow|\Sigma|$ is a function maps each token to its order in vocabulary.
\end{proof}

Given the unbiased condition guaranteed by this Lemma and the independent condition guaranteed by WaterPool, we immediately have the imperceptibility of ITS-Pool according to Proposition \ref{prop:imperceptiblity-requirements}.

\subsection{Proof of Proposition \ref{prop:waterpool-efficacy} (EXP-Pool's Efficacy) }
{
\newcommand{\key}{\ensuremath{{\rvk_i}}}
\newcommand{\token}{\ensuremath{{\rvy_i}}}
\newcommand{\prob}{\ensuremath{{P_i(\rvy_i)}}}
\newcommand{\gumbel}{\ensuremath{{g_{\rvy_i}}}}
\newcommand{\othertoken}{\ensuremath{{t}}}
\newcommand{\othergumbel}{\ensuremath{{g_{t}}}}
\newcommand{\otherprob}{\ensuremath{{P_i({t})}}}
\newcommand{\prefix}{\ensuremath{{\rvy_{<i}}}}
\newcommand{\ord}{\ensuremath{\text{ord}}}
\newcommand{\probFunc}[1]{\ensuremath{{P_i(#1)}}}

We first recall the mark module of EXP-Pool (\textit{gumbel-sample} in Table \ref{tab:prior-watermark-design-mark}). 
During generation, the modification function $F_{exp}(\rvk_i,P_i)$ takes a private key $\rvk_i$ as seed to sample a standard Gumbel vector. A Gumbel-max sampling is then conduct to sample an output token, of which the degenerate distribution is returned.
During detection, the mark module takes in a restored private key $\hat\key$ to generate a restored Gumbel vector $\rvg_{\hat\key}\in\R^{|\Sigma|},\rvg_{\hat\key,j}\sim\text{Gumbel}(0,1)$. The per-token statistic $S_{exp}(\hat\rvk_i,\rvy_i)=-\exp(-\rvg_{\hat\key,\ord(\rvy_i)})$ is then calculated, where $\ord: \Sigma\rightarrow \{1,...,|\Sigma|\}$ is a function maps each token to the its order in the vocabulary. 

For simplicity, we denote $g_t$ as the Gumbel variable $\rvg_{\rvk_i,\ord(t)}$ corresponding to the token $t$.

We begin by proving that given a candidate token $\token$, the expectation of per-token statistic only relies on the original distribution $P_i(\rvy_i)$ if $\rvy_i$ is sampled from the modified distribution seeded by $\rvk_i$. It can be formalized as the following lemma.
\begin{lemma}
\label{lem:lem-exp-pool-efficacy}
    Given a prefix $\rvy_{<i}$ and a token $\rvy_i$, for a private key $\key\sim\mathcal U(\R)$, if $\rvy_i$ is sampled from $F_{exp}(\rvk_i,P_M(\cdot|\rvy_{<i}))$, then $\E_{\key\sim\R}[S_{exp}(\rvk_i,\rvy_i)|\token,\prefix] = -\probFunc{\token}$.
\end{lemma}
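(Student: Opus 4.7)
The plan is to reduce the Gumbel-max sampling to inverse transform sampling with uniform variables, which makes both the statistic and the event of being chosen transparent. I would use the standard substitution $u_t := \exp(-\exp(-g_t))$, which is uniform on $[0,1]$ when $g_t \sim \mathrm{Gumbel}(0,1)$, and under which $-\exp(-g_t) = \log u_t$. So the per-token statistic becomes $S_{exp}(\rvk_i,\rvy_i) = \log u_{\rvy_i}$, and the Gumbel-max event ``$\rvy_i$ is the argmax'' translates (as in the derivation in Appendix \ref{app:exp-imperceptibility}) into the simple constraint $u_{t} \leq u_{\rvy_i}^{\,P_i(t)/P_i(\rvy_i)}$ for every $t \in \Sigma$.

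Next I would compute the conditional expectation $\E[\log u_{\rvy_i}\mid \rvy_i\text{ is the argmax}]$ by integrating out the other coordinates first. Conditional on $u_{\rvy_i}$, the $u_t$ for $t\neq \rvy_i$ are independent uniforms, so the probability that $\rvy_i$ wins is
\begin{equation*}
\prod_{t\neq \rvy_i} u_{\rvy_i}^{\,P_i(t)/P_i(\rvy_i)} \;=\; u_{\rvy_i}^{\,(1-P_i(\rvy_i))/P_i(\rvy_i)}.
\end{equation*}
Writing $a := (1-P_i(\rvy_i))/P_i(\rvy_i)$, the numerator and denominator of the conditional expectation reduce to the one-dimensional integrals $\int_0^1 u^a \log u\,du = -1/(a+1)^2$ and $\int_0^1 u^a\,du = 1/(a+1)$. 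Substituting $a+1 = 1/P_i(\rvy_i)$ collapses the ratio to $-P_i(\rvy_i)$, which is exactly the claim.

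The only mildly technical step is the translation from the Gumbel inequalities $\log P_i(\rvy_i)+g_{\rvy_i} \geq \log P_i(t)+g_t$ into $u_t \leq u_{\rvy_i}^{\,P_i(t)/P_i(\rvy_i)}$; this is essentially the same monotone manipulation already carried out in the proof of Proposition \ref{prop:waterpool-imperceptibility} for EXP-Pool, so I would cite it rather than redo it. After that, the computation is a single-variable integral identity, so I do not expect any real obstacle; the main thing to be careful about is that the conditioning is on the \emph{event} $\{\rvy_i = \arg\max\}$ rather than on a fixed value of the Gumbel vector, which is why the ratio of the two integrals (and not just the numerator) must be taken.
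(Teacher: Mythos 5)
Your proof is correct and takes a genuinely different, and arguably cleaner, route than the paper. The paper's proof computes the conditional cumulative distribution function $P(S_{exp}(\rvk_i,\rvy_i)\le v\mid \rvy_i,\rvy_{<i})$, identifies $-S_{exp}\mid \rvy_i,\rvy_{<i}$ as an exponential random variable with rate $1/P_i(\rvy_i)$, and then reads off the mean $-P_i(\rvy_i)$. You instead skip the characterization of the full conditional law and compute the first moment directly as a ratio of two one-dimensional integrals,
\begin{equation*}
\E[\log u_{\rvy_i}\mid \rvy_i\ \text{argmax}] \;=\; \frac{\int_0^1 u^a\log u\,du}{\int_0^1 u^a\,du} \;=\; -\frac{1}{a+1}\;=\;-P_i(\rvy_i),
\end{equation*}
with $a=(1-P_i(\rvy_i))/P_i(\rvy_i)$. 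Both arguments rest on the same uniformization $u_t=\exp(-\exp(-g_t))$ and the conditional independence of the $u_t$ given $u_{\rvy_i}$, so there is no new machinery; but your version buys two things. First, it is more economical: only the mean is needed for the downstream statistical-difference bound, so there is no reason to pin down the entire law. Second, it handles the conditioning transparently. In the paper's derivation the product over $t\in\Sigma$ appears to factor a probability across Gumbel coordinates in a way that, as written, is not obviously legitimate under the argmax conditioning (and the displayed CDF $\exp(-v/P_i(\rvy_i))$ has a sign that does not match a valid CDF for $v<0$, though the claimed exponential law and its mean are correct); your ratio-of-integrals formulation makes the marginalization over $u_{\rvy_i}$ explicit and avoids that ambiguity entirely. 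The only thing I would add in a polished write-up is a one-line justification that $P(\rvy_i\ \text{argmax})=\int_0^1 u^a\,du=P_i(\rvy_i)$ is nonzero, so the conditional expectation is well defined.
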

\begin{proof}
    The randomness of $\key$ affects the statistic via the Gumbel vector $\rvg_\key=[g_t]_{t\in\Sigma}$.
    For simplicity, we only consider the randomness of $\rvg_\key$ instead of $\key$.

    We first calculate the cumulative distribution function of $S_{exp}(\key,\token)|\token,\prefix$.
    \begin{align*}
        &P(S_{exp}(\key,\token)\leq v|\token,\prefix)\\
        =& P(g_\token\leq-\log(-v)|\token,\prefix)\\
        \overset{(1)}{=}& P\left(\left.\bigcap_{\othertoken\in\Sigma} 
        \othergumbel+\log\otherprob
        \leq   
        \gumbel + \log\prob
        \leq 
        -\log(-v) + \log\prob
        \right|\token,\prefix
        \right)\\
        =& 
        \prod_{t\in\Sigma}
        P\left(\left.
        g_t\leq \log\frac{\otherprob}{-v\prob}
        \right|\token,\prefix
        \right)\\
        \overset{(2)}{=}& \prod_{t\in\Sigma} \exp(-v\frac{\otherprob}{\prob})\\
        =& \exp(-v/\prob)
    \end{align*}
    , where the equation $(1)$ follows from the definition of Gumbel max sampling; $(2)$ follows from $g_t\sim\text{Gumbel}(0,1) $. Therefore, $-S_{exp}(\key,\token)|\token,\prefix\sim Exp(1/\prob)$. The lemma follows immediately by calculating the expectation.
\end{proof}

On the contrary, if $\token$ is not sampled from modified distribution seeded by $\key$,
\begin{equation*}
    P(S_{exp}(\key,\token)\leq v|\token,\prefix)=P(g_\token\leq-\log(-v))=\exp(v)
\end{equation*}
, from which we have $-S_{exp}(\key,\token)|\token,\prefix\sim Exp(1)$ and thus $\E_{\key\sim\mathcal U(\R)}[S_{exp}(\rvk_i,\rvy_i)|\token,\prefix]=-1$. 

Eventually, we can guaranteed the statistical difference of EXP-Pool given the prefix $\prefix$, 
\begin{align*}
    &\E[S_{exp}(\rvk_i,\rvy_i)|\prefix, H_1]-\E[S_{exp}(\rvk_i,\rvy_i)|\prefix, H_0]\\
    =&
    \E_{\token,\key}[S_{exp}(\hat\key,\token)|\prefix, H_1,\hat\key=\key]\cdot p_{recall}\\
    &+\E_{\token,\key}[S_{exp}(\hat\key,\token)|\prefix, H_1,\hat\key\neq\key]\cdot (1-p_{recall})+ 1\\
    =&\E_\token[-\prob|\prefix]\cdot p_{recall} - (1-p_{recall})+1\\
    =&\sum_{\token\in\Sigma}(1-\prob)\prob\cdot p_{recall}:=\phi_{exp}(\vp^i)\cdot p_{recall}
\end{align*}
, where $\phi_{exp}(\vp^i)$ is only relevant to probability vector $\vp^i$ of $P_M(\cdot|\prefix)$, representing watermarking potentials at this step, and $p_{recall}$ is the recall performance of the retriever in EXP-Pool.

}

\subsection{Proof of Proposition \ref{prop:waterpool-efficacy} (ITS-Pool's Efficacy) }
{
    

\newcommand{\key}{\ensuremath{{\rvk_i}}}
\newcommand{\token}{\ensuremath{{\rvy_i}}}
\newcommand{\prefix}{\ensuremath{{\rvy_{<i}}}}
\newcommand{\ord}{\ensuremath{\text{ord}}}
\newcommand{\permToken}{\ensuremath{\pi(\token)}}
\newcommand{\probFunc}[1]{\ensuremath{{P_i(#1)}}}

We first recall the mark module of ITS-Pool (\textit{inverse-sample} in Table \ref{tab:prior-watermark-design-mark}). During generation, the modification function $F_{its}(\key,P_i)$ takes a private key $\key$ as seed to sample a standard uniform variable and a random permutation. An inverse transform sampling is then conduct on the permuted distribution to sample an output token, of which the degenerate distribution is returned.
During detection, the mark module takes in a restored private key $\hat\key$ to restore a standard uniform variable $\ru_{\hat\key}\sim\mathcal{U}(0,1)$ and a random permutation $\pi_{\hat\key}:\Sigma\rightarrow \Sigma$. The per-token statistic $S_{its}(\hat\key,\token)=(\ru_{\hat\key}-\frac12)(\frac{\text{ord}(\pi_{\hat\rvk_i}(\rvy_i))-1}{|\Sigma|-1}-\frac12)$ is then calculated, where $\ord: \Sigma\rightarrow \{1,...,|\Sigma|\}$ is a function maps each token to the its order in the vocabulary. 


We begin by proving that given a candidate token $\token$, the expectation of per-token statistic only relies on the original distribution $P_i(\rvy_i)$ if $\rvy_i$ is sampled from the modified distribution seeded by $\rvk_i$ (i.e. the alternative hypothesis $H_1$). It can be formalized as the following lemma\footnote{It is based on Lemma B.1 in \citet{kuditipudiRobustDistortionfreeWatermarks2023}}.
\begin{lemma}
\label{lem:lem-its-pool-efficacy}
    Given a prefix $\rvy_{<i}$ and a token $\rvy_i$, for a private key $\key\sim\mathcal U(\R)$, if $\rvy_i$ is sampled from $F_{its}(\rvk_i,P_M(\cdot|\rvy_{<i}))$, then $\E_{\key\sim\mathcal U(\R)}[S_{its}(\rvk_i,\rvy_i)|\token,\prefix] = C_0\cdot (1-\probFunc{\token})$, where $C_0$ is a constant relevant to vocabulary size $|\Sigma|$.
\end{lemma}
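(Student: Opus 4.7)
The plan is to compute the conditional expectation by first conditioning on the random permutation $\pi_{\rvk_i}$, then averaging over $u_{\rvk_i}$ on the slice where $\rvy_i$ is the sampled token, and finally exploiting symmetry over $\pi_{\rvk_i}$. First I would observe that by the unbiased condition established in Appendix C.3, the marginal $P(Y=\rvy_i\mid \pi)$ equals $P_i(\rvy_i)$ for every $\pi$, so by Bayes the posterior distribution of $\pi_{\rvk_i}$ given $Y=\rvy_i$ is still uniform over the symmetric group on $\Sigma$.

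Next, fix $\pi$ and let $a:=\text{ord}(\pi(\rvy_i))$ be the rank of $\rvy_i$ in the permuted ordering, and let $F_{\text{perm}}(a-1):=\sum_{t':\,\text{ord}(\pi(t'))<a} P_i(t')$. By definition of inverse transform sampling, the event $\{Y=\rvy_i\}$ coincides with $\{u_{\rvk_i}\in[F_{\text{perm}}(a-1),F_{\text{perm}}(a))\}$, an interval of length $P_i(\rvy_i)$; hence $u_{\rvk_i}$ is conditionally uniform on this interval, and
\begin{equation*}
\E\bigl[u_{\rvk_i}-\tfrac12 \,\bigm|\, \pi, Y=\rvy_i\bigr] \;=\; F_{\text{perm}}(a-1)+\tfrac{P_i(\rvy_i)}{2}-\tfrac12.
\end{equation*}
Since the second factor $\bigl(\tfrac{a-1}{N-1}-\tfrac12\bigr)$ of $S_{its}$ is deterministic given $\pi$, the conditional expectation factors cleanly.

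The final step is to average over $\pi$ uniformly. I would group permutations first by the rank $a\in\{1,\ldots,N\}$ (uniform on $\{1,\ldots,N\}$), and then, given $a$, by the uniform choice of which $a-1$ tokens of $\Sigma\setminus\{\rvy_i\}$ precede $\rvy_i$. Each other token lies in the preceding set with probability $\tfrac{a-1}{N-1}$, so by linearity $\E_{\pi}[F_{\text{perm}}(a-1)\mid a]=\tfrac{a-1}{N-1}(1-P_i(\rvy_i))$. Plugging back in yields a sum over $a$ of two pieces: the $\bigl(\tfrac{P_i(\rvy_i)}{2}-\tfrac12\bigr)$-piece vanishes because $\sum_{a=1}^N\bigl(\tfrac{a-1}{N-1}-\tfrac12\bigr)=0$ by a direct arithmetic-series computation, leaving
\begin{equation*}
\E[S_{its}\mid Y=\rvy_i]\;=\;(1-P_i(\rvy_i))\cdot\frac{1}{N}\sum_{a=1}^N \frac{a-1}{N-1}\Bigl(\frac{a-1}{N-1}-\frac12\Bigr).
\end{equation*}
Evaluating this sum by the standard formulas $\sum_{j=0}^{N-1} j=\tfrac{N(N-1)}{2}$ and $\sum_{j=0}^{N-1}j^2=\tfrac{(N-1)N(2N-1)}{6}$ gives $C_0=\tfrac{N+1}{12(N-1)}$, which depends only on $N=|\Sigma|$, as required.

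The main obstacle will be the careful bookkeeping of the conditioning: I need to justify that $u_{\rvk_i}$ and $\pi_{\rvk_i}$ decouple correctly from the private-key seed, and that the posterior over $\pi$ remains uniform after conditioning on $Y=\rvy_i$ (which is precisely where the unbiased condition is invoked). Once this factorization is in place, the symmetry argument for $\E_\pi[F_{\text{perm}}(a-1)\mid a]$ and the arithmetic closed-form for $C_0$ are routine.
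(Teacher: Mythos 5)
Your proposal is correct and follows essentially the same route as the paper: it shows via Bayes that $\pi\,|\,\rvy_i,\rvy_{<i}$ remains uniform, that $\ru\,|\,\pi,\rvy_i,\rvy_{<i}$ is conditionally uniform on an interval of length $P_i(\rvy_i)$, deduces $\E[\ru-\tfrac12\mid a,\rvy_i,\rvy_{<i}]=(1-P_i(\rvy_i))\bigl(\tfrac{a-1}{|\Sigma|-1}-\tfrac12\bigr)$ for the rank $a=\text{ord}(\pi(\rvy_i))$, and then averages over $a$. The only cosmetic difference is that the paper packages the final aggregation as a covariance and leaves $C_0$ as $\Var\bigl(\tfrac{\text{ord}(\pi(\rvy_i))-1}{|\Sigma|-1}\bigr)$, whereas you evaluate the arithmetic-series sum explicitly to obtain the closed form $C_0=\tfrac{|\Sigma|+1}{12(|\Sigma|-1)}$, which is indeed the value of that variance.
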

\begin{proof}
    The randomness of $\key$ affects the statistic via the uniform variable $\ru_\key$ and the permutation $\pi_\key$. 
    For simplicity, we omit the subscript $\key$ and only consider the randomness of $\ru$ and $\pi$.
    
    We show that $\pi|\token,\prefix$ is a uniform random variable over the permutation space.
    \begin{align*}
        P(\pi|\token,\prefix)=\frac{P(\token|\pi,\prefix)P(\pi)}{P(\token|\prefix)}=P(\pi)
    \end{align*}
    , where the second equation follows from that permutation won't affect the inverse transform sampling (see Appendix\ref{app:its-imperceptibility})

    We also prove that $\ru|\pi,\token,\prefix$ is a uniform random variable. We define the interval of $\token$ given $\pi$ in inverse transform sampling during ITS generation,
    \begin{equation*}
    I(\token,\pi)=[\probFunc{\{t':\ord(t')<\ord\circ\pi(\token)\}},\probFunc{\{t':\ord(t')\leq\ord\circ\pi(\token)\}}]
    \end{equation*}
    It is evident that $|I(\token,\pi)=\probFunc{\token}|$. Then for any interval $I\subset [0,1]$ we have
    \begin{equation*}
        P(\ru\in I|\token,\pi,\prefix)=\frac{P(\token,\ru\in I|\pi,\prefix)}{P(\token|\pi,\prefix)}=\frac{|I\cap I(\token,\pi)|}{|I(\token,\pi)|}
    \end{equation*}
    So $\ru|\pi,\token,\prefix\sim\mathcal U(I(\token,\pi))$. Then we have,
    \begin{align*}
        \E[\ru|\token,\pi(\token),\prefix] &=\E\left[\left.\probFunc{\{t':\ord(t')<\ord\circ\pi(\token)\}}+\frac{|I(\token,\pi)|}{2}\right|\token,\pi(\token),\prefix\right]\\
        &=\frac{(\pi(\token)-1)}{|\Sigma|-1}\cdot (1-\probFunc{\token})+\frac{\probFunc{\token}}{2}\\
        &=\frac12+(\frac{(\pi(\token)-1)}{|\Sigma|-1}-\frac12)(1-\probFunc{\token})
    \end{align*}
    
    It is evident that $\E[\ru]=\frac12$ and $\E[\frac{(\pi(\token)-1)}{|\Sigma|-1}]=\frac12$, since they are both uniform standard variables. Therefore, $S_{its}(\rvk_i,\rvy_i)$ essentially calculates the covariance between $\ru$ and $\frac{\pi(\token)-1}{|\Sigma|-1}$, which is tractable as following,
    \begin{align*}
        \E_{\key}[S_{its}(\rvk_i,\rvy_i)|\token,\prefix]
        &=
        \text{Cov}\left(\left.\ru,\frac{\pi(\token)-1}{|\Sigma|-1}\right|\token,\prefix\right)\\
        &=(\ru-\frac12)
        (\frac{\pi(\token)-1}{|\Sigma|-1}-\frac12)
        \cdot 
        P(\ru,\pi(\token)|\token,\prefix)\\
        &=
        \E[\ru-\frac12|\token,\pi(\token),\prefix]
        \cdot 
        (\frac{\pi(\token)-1}{|\Sigma|-1}-\frac12)
        \cdot 
        P(\pi(\token)|\token,\prefix)\\
        &=
        (1-\probFunc{\token})
        \cdot
        (\frac{\pi(\token)-1}{|\Sigma|-1}-\frac12)^2
        \cdot 
        P(\pi(\token)|\token,\prefix)\\
        &=
        (1-\probFunc{\token})
        \cdot
        \text{Var}\left(\left.\frac{\pi(\token)-1}{|\Sigma|-1}\right|\token,\prefix\right)\\
        &=C_0\cdot (1-\probFunc{\token})
    \end{align*}
    ,where $C_0=\text{Var}\left(\left.\frac{\pi(\token)-1}{|\Sigma|-1}\right|\token,\prefix\right)$ is a constant since $\pi|\token,\prefix$ is uniform over the space of vocabulary permutation.
\end{proof}

On the contrary, if $\token$ is not sampled from modified distribution seeded by $\key$, $\E_\key[S_{its}(\rvk_i,\rvy_i)|\token,\prefix]=\text{Cov}(\ru,\frac{\pi(\token)-1}{|\Sigma|-1}|\token,\prefix)$ still holds. Now that $\key$ and $\token$ are independent, $\E_\key[S_{its}(\rvk_i,\rvy_i)|\token,\prefix]=0$ trivially. Therefore, $\E_{\token,\key}[S_{its}(\hat\key,\token)|\prefix,H_0]=0$ follows immediately.

Under the alternative hypothesis $H_1$, the lemma above provides that
\begin{align*}
    \E_{\token,\key}[S_{its}(\hat\key,\token)|\prefix,H_1] 
    &=\E_{\token,\key}[S_{its}(\hat\key,\token)|\prefix,H_1,\hat\key=\key]\cdot p_{recall}\\
    &~~~~+ \E_{\token,\key}[S_{its}(\hat\key,\token)|\prefix,H_1,\hat\key\neq\key]\cdot (1-p_{recall})\\
    &=\E_\token[C_0\cdot (1-\probFunc{\token})|\prefix]\cdot p_{recall}\\
    &=C_0\cdot p_{recall}\cdot \sum_{\token\in\Sigma}(1-\probFunc{\token})\probFunc{\token}
\end{align*}
, where $p_{recall}$ represents the recall performance of the retriever in ITS-Pool.

Finally, we can guarantee the statistical difference of ITS-Pool,
\begin{equation*}
    \E[S_{its}(\key,\token)|\prefix,H_1]-\E[S_{its}(\key,\token)|\prefix,H_0]=C_0\cdot\sum_{\token\in\Sigma}(1-\probFunc{\token})\probFunc{\token}\cdot p_{recall}:=\phi_{its}(\vp^i)\cdot p_{recall}
\end{equation*}
, where $\phi_{its}(\vp^i)$ is only relevant to probability vector $\vp^i$ of $P_M(\cdot|\prefix)$, representing watermarking potentials at this step.

}

\subsection{Proof of Proposition \ref{prop:waterpool-efficacy} (KGW-Pool's Efficacy)}
{
\newcommand{\key}{\ensuremath{{\rvk_i}}}
\newcommand{\token}{\ensuremath{{\rvy_i}}}
\newcommand{\prob}{\ensuremath{{P_i(\rvy_i)}}}
\newcommand{\othertoken}{\ensuremath{{t}}}
\newcommand{\otherprob}{\ensuremath{{P_i({t})}}}
\newcommand{\prefix}{\ensuremath{{\rvy_{<i}}}}
\newcommand{\ord}{\ensuremath{\text{ord}}}
\newcommand{\probFunc}[1]{\ensuremath{{P_i(#1)}}}
\newcommand{\greentokens}{\ensuremath{{\mathcal G}}}
\newcommand{\vocabsize}{\ensuremath{{|\Sigma|}}}

We first recall the mark module of KGW-Pool (\textit{logits-add} in Table \ref{tab:prior-watermark-design-mark}). 
During generation, the mark module will randomly sample a green list $\greentokens_{\key}$ of $\gamma|\Sigma|$ tokens from vocabulary, which is seeded by $\key$. Logits of these green tokens are increased by a constant $\delta$ to form the modified distribution $F_{kgw}(\key,\prob)$.
During detection, the mark module takes in a restored private key $\hat\key$, generates a green list $\greentokens_{\hat\key}$ seeded by $\hat\key$, and then calculates the per-token statistic $S_{kgw}(\hat\rvk_i, \rvy_i)=\frac{\1_{ \rvy_i\in \mathcal G_{\hat\rvk_i}} - \gamma}{\sqrt{\text{len}(\rvy)\gamma(1-\gamma)}}$.

For simplicity, we omit the subscript $\key$ in $\greentokens_{\rvk_i}$. We also denote the size of vocabulary and green list as $N=|\Sigma|$ and $N_G=\gamma|\Sigma|$ respectively.

Since the denominator is a constant under both hypotheses, we only need to focus on $S'(\rvk_i, \rvy_i):=\1_{ \rvy_i\in \mathcal G_{\rvk_i}}$. Under the alternative hypothesis, the expectation of $S'$ is essentially the probability of sampling a token from the green list during KGW-Pool generation. We show that the probability can be bounded from below, as formalized in the following lemma\footnote{It is based on Lemma F.1 in \citet{kirchenbauerWatermarkLargeLanguage2023}}.
\begin{lemma}
\label{lem:lem-exp-pool-efficacy}
    Given a prefix $\rvy_{<i}$, for a private key $\key\sim\mathcal U(\R)$ and a token $\token$, if $\token$ is sampled from $F_{kgw}(\rvk_i,P_M(\cdot|\rvy_{<i}))$, then $\E_{\key,\token}[S'(\rvk_i,\rvy_i)|\prefix] \geq C_1\cdot\textup{Spike}(\vp^i,\frac{(1-\gamma)(\alpha-1)}{1+(\alpha-1)\gamma})$, where $C_1$ is a constant and $\textup{Spike}(\vp^i,c)$ is the spike entropy defined in \citet{kirchenbauerWatermarkLargeLanguage2023}.
\end{lemma}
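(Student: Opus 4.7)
The plan is to reduce the expected statistic to a closed-form expression in the reweighted distribution and then apply Jensen's inequality to recover the spike entropy. Writing $\alpha = e^\delta$, the KGW reweighting gives $\hat P_i(t) = \alpha P_i(t)/Z(\rvk_i)$ on the green list and $\hat P_i(t) = P_i(t)/Z(\rvk_i)$ elsewhere, with $Z(\rvk_i) = 1 + (\alpha-1)P_i(\mathcal{G}_{\rvk_i})$. Under $H_1$, $\rvy_i$ is drawn from $\hat P_i$, so conditioning on the green partition yields $\Pr[\rvy_i\in\mathcal{G}_{\rvk_i}\mid\rvk_i,\rvy_{<i}] = \alpha P_i(\mathcal{G}_{\rvk_i})/Z(\rvk_i)$.

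I would then take the outer expectation over the uniformly random $\gamma$-fraction partition. Expanding $P_i(\mathcal{G}_{\rvk_i}) = \sum_t p_t\, \1_{t\in\mathcal{G}_{\rvk_i}}$ and using the marginal $\Pr(t\in\mathcal{G}_{\rvk_i}) = \gamma$ together with linearity gives
\begin{equation*}
\E_{\rvk_i,\rvy_i}[S'(\rvk_i,\rvy_i)\mid\rvy_{<i}] \;=\; \gamma\alpha\sum_t p_t\, \E_{\rvk_i}\!\left[\frac{1}{1 + (\alpha-1)P_i(\mathcal{G}_{\rvk_i})} \,\bigg|\, t\in\mathcal{G}_{\rvk_i}\right].
\end{equation*}
Since $x\mapsto 1/(1+(\alpha-1)x)$ is convex on $[0,1]$, Jensen's inequality lower-bounds the inner expectation by $1/(1+(\alpha-1)\mu_t)$, where $\mu_t := \E[P_i(\mathcal{G}_{\rvk_i})\mid t\in\mathcal{G}_{\rvk_i}] = p_t + \tfrac{\gamma N-1}{N-1}(1-p_t) \approx p_t + \gamma(1-p_t)$ by symmetry of the random partition, with $N=|\Sigma|$.

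Substituting this bound and rearranging $1 + (\alpha-1)(p_t + \gamma(1-p_t)) = (1+(\alpha-1)\gamma) + (\alpha-1)(1-\gamma)p_t$ to factor out $1/(1+(\alpha-1)\gamma)$ produces
\begin{equation*}
\E_{\rvk_i,\rvy_i}[S'(\rvk_i,\rvy_i)\mid\rvy_{<i}] \;\geq\; \frac{\gamma\alpha}{1 + (\alpha-1)\gamma}\sum_t \frac{p_t}{1 + \frac{(\alpha-1)(1-\gamma)}{1+(\alpha-1)\gamma}\, p_t} \;=\; C_1\cdot\textup{Spike}(\vp^i, z),
\end{equation*}
with $C_1 = \gamma\alpha/(1+(\alpha-1)\gamma)$ and $z = (\alpha-1)(1-\gamma)/(1+(\alpha-1)\gamma)$, matching the lemma.

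The main obstacle I anticipate is the finite-vocabulary correction: the conditional mean carries an exact factor $(\gamma N-1)/(N-1)$ rather than a clean $\gamma$, so either an $O(1/N)$ slack must be absorbed into $C_1$ or the constant should be stated as vocabulary-dependent, following Kirchenbauer et al.'s treatment in their Lemma F.1. A secondary care point is explicitly verifying that Jensen applies to the without-replacement distribution of $P_i(\mathcal{G}_{\rvk_i})$, which is routine since convexity is preserved under arbitrary mixing but should be written out when formalizing.
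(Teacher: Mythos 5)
Your proposal is correct and takes essentially the same route as the paper: expand the green-hit probability into a per-token sum, apply Jensen's inequality to the convex map $x \mapsto 1/\bigl(1+(\alpha-1)x\bigr)$ of the conditional green-list mass, and normalize to recover the spike entropy with $C_1 = \gamma\alpha/(1+(\alpha-1)\gamma)$. The paper packages the Jensen step via a permutation-symmetry argument --- it observes that the per-token inner expectation $f_{\rvy_i}(\vp)$ is permutation-invariant and convex in the coordinates other than $\rvy_i$, averages over vocabulary permutations, and evaluates at the symmetrized vector --- but this is equivalent to your direct Jensen on $P_i(\mathcal G_{\rvk_i})$ conditioned on $t\in\mathcal G_{\rvk_i}$, and both land on the identical intermediate denominator $1 + (\alpha-1)\bigl(p_t + (1-p_t)\tfrac{N_G-1}{N-1}\bigr)$.

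The finite-vocabulary correction you flag is not actually an obstacle. Since $\tfrac{\gamma N-1}{N-1} = \gamma - \tfrac{1-\gamma}{N-1} \leq \gamma$ and $x\mapsto 1/(1+(\alpha-1)x)$ is decreasing for $\alpha>1$, replacing the exact factor by $\gamma$ only moves the bound further down, in the favorable direction. Your ``$\approx$'' should simply be ``$\geq$'', and the clean constant $C_1 = \gamma\alpha/(1+(\alpha-1)\gamma)$ then holds rigorously with no vocabulary-dependent slack. The paper performs the same step in algebraic disguise: it writes the bound as $p_t\tfrac{\alpha N-\alpha}{D-\alpha}$ and applies $\tfrac{a-\alpha}{b-\alpha}\geq\tfrac{a}{b}$ when $b\leq a$ to drop the residual $-\alpha$ terms, which is precisely your monotonicity observation.
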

\begin{proof}
    Trivially, we have
    $\E_{\key,\token}[S'(\rvk_i,\rvy_i)|\prefix]=P(\token\in\greentokens, \token, \greentokens|\prefix)$. We consider the following process of sampling $\token$ and $\greentokens$. We first randomly choose a token $t$ as output token $\token$, and then randomly sample the remaining tokens to construct the green list. Therefore, the expected probability of a token from the green list being sampled can be written as,
    \begin{equation*}   \E_{\token\in\Sigma}\E_{\greentokens \,s.t.\token\in\greentokens}\frac{\alpha\prob}{\sum_{t\in\Sigma}\otherprob+\alpha\sum_{t\in\greentokens}\otherprob}
    \end{equation*}
    , where $\alpha=\exp(\delta)$.

    Define the inner expectation as $f_\token(\vp^i)$, where $\vp^i$ is the probability vector of $P_i(\cdot)$. Trivially, $f_\token(\vp^i)=f_\token(\Pi\vp^i)$ for any permutation $\Pi$ over the vocabulary except $\token$. Also, $f_\token$ is convex in $\vp^i_{-\token}$. Therefore, we have,
    \begin{align*}
        f_\token(\vp^i) &= \E_\Pi f_\token(\Pi\vp^i)\\
        &\overset{(1)}{\geq} f_\token(\E_\Pi\Pi\vp^i)\\
        &\overset{(2)}{\geq} \frac{\alpha\prob}{(1-\prob)(N-N_G)/(N-1)+\alpha(1-\prob)(N_G-1)/(N-1)+\alpha\prob}\\
        &=\prob\frac{\alpha N-\alpha}{N-N_G+\alpha N_G+(\alpha-1)(N-N_G)\prob-\alpha}\\
        &\geq \prob\frac{\alpha N}{N-N_G+\alpha N_G+(\alpha-1)(N-N_G)\prob}\\
        &=\frac{\alpha\prob}{(1-\gamma)+\alpha\gamma + (\alpha-1)(1-\gamma)\prob}
    \end{align*}
    , where $(1)$ follows from Jensen's inequality; $(2)$ follows from $\E_\Pi\Pi\vp^i_t=\frac{1-\prob}{N-1},\,\forall t\neq\token$. Then we have,
    \begin{align*}
    \E_{\key\token}[S'(\rvk_i,\rvy_i)|\prefix]&=P(\token\in\greentokens, \token, \greentokens|\prefix)\\
    &=N_G\cdot \E_{\token\in\Sigma}\E_{\greentokens \,s.t.\token\in\greentokens}\frac{\alpha\prob}{\sum_{t\in\Sigma}\otherprob+\alpha\sum_{t\in\greentokens}\otherprob}\\
    &=N_G\cdot \E_{\token\in\Sigma} f_\token(\vp^i)\\
    &\geq \frac{\gamma\alpha}{1+(\alpha-1)\gamma} \text{Spike}(\vp^i,\frac{(1-\gamma)(\alpha-1)}{1+(\alpha-1)\gamma})\\
    &:=C_1\cdot\text{Spike}(\vp^i,\frac{(1-\gamma)(\alpha-1)}{1+(\alpha-1)\gamma})
    \end{align*}
    , where $\textup{Spike}(\vp^i,c)=\sum_{t\in\Sigma}\frac{\otherprob}{1+c\otherprob}$. And the lower bound is strictly larger than $\gamma$.
\end{proof}

On the contrary, under the null hypothesis, trivially we have $\E[S'(\rvk_i,\rvy_i)|\prefix, H_0]=\gamma$, since $\rvk_i$ and $\rvy_i$ are independent. Combining all above, we eventually have,
\begin{align*}
    &\E[S(\hat\rvk_i,\rvy_i)|\prefix, H_1]-\E[S(\hat\rvk_i,\rvy_i)|\prefix, H_0]\\
    =& \E_{\key,\token}[S(\hat\rvk_i,\rvy_i)|\prefix, H_1, \hat\key=\key]\cdot p_{recall}\\
    &+ \E_{\key,\token}[S(\hat\rvk_i,\rvy_i)|\prefix, H_1, \hat\key\neq\key]\cdot (1-p_{recall}) - 0\\
    =& (\E_{\key,\token}[S'(\rvk_i,\rvy_i)|\prefix, H_1]-\gamma)/\sqrt{\text{len}(\rvy)\gamma(1-\gamma)}\cdot p_{recall} + 0 - 0\\
    \geq& (C_1\cdot \text{Spike}(\vp^i,\frac{(1-\gamma)(\alpha-1)}{1+(\alpha-1)\gamma})-\gamma )/\sqrt{\text{len}(\rvy)\gamma(1-\gamma)}\cdot p_{recall}\\
    :=&\phi_{kgw}(\vp^i)\cdot p_{recall}\\
\end{align*}
, where $\phi_{kgw}(\vp^i)$ is only relevant to probability vector $\vp^i$ of $P_M(\cdot|\prefix)$, representing watermarking potentials at this step, and $p_{recall}$ is the recall performance of the retriever in KGW-Pool.

}

\clearpage
\section{Experimental Details}
\label{app:experiment-details}
\paragraph{Datasets.} Following previous works \citep{kirchenbauerWatermarkLargeLanguage2023,kirchenbauerReliabilityWatermarksLarge2023}, we include two common used datasets for our experiments, the Colossal Common Crawl Cleaned corpus (C4) and "Explain Like I'm Five" (ELI5) \citep{fan-etal-2019-eli5}. We randomly select 3000 texts of length 50 from C4 as prompts for open-ended generation task. As for ELI5, we use the version curated by \citet{krishnaParaphrasingEvadesDetectors2023} including 2758 samples for long-form question answering. 

\paragraph{Attack.} To comprehensively evaluate the robustness of watermarking techniques, we include three different kinds of attacks, namely Lexical, Dipper and Translation. Lexical-attack is a baseline attack by randomly add/delete/replace a small portion of texts. We randomly modify 10\% tokens of the sentence. Dipper is a paraphrasing model proposed by \citet{krishnaParaphrasingEvadesDetectors2023}. We set the \lstinline|lex=40,div=40| following \citep{kirchenbauerReliabilityWatermarksLarge2023}. Translation-attack represents roundtrip-translation, which is a widely used paraphrasing method. Following \citet{kuditipudiRobustDistortionfreeWatermarks2023}, we translate texts to Russian and then translate them back to English.


\paragraph{Implementation details.} 
We conduct experiments on two models of different scales, OPT-1.3b and OPT-6.7b, following \citet{krishnaParaphrasingEvadesDetectors2023}. 
On both open-ended generation and long-form question answering, we conduct multi-nomial sampling to generate sequences within the range of [50, 70] tokens. 
We use a 128 dimension sentence embedding model \citep{nussbaum2024nomic} as the retriever in WaterPool.
As for implementation of mark modules in different WaterPool (i.e. KGW-Pool, ITS-Pool, EXP-Pool), we use identical hyper-parameter settings as the original watermarking technique. All baselines are reproduced based on source codes provided by original paper. For KGW, we set $\delta=2.0,\,\gamma=0.25$ as suggested in \citet{kirchenbauerWatermarkLargeLanguage2023}. For Unigram, we set $\delta=2.0,\,\gamma=0.5$ as suggested in \citet{zhaoProvableRobustWatermarking2023}. For Gamma and Delta, we use the context length of $5$ and search the perturbation strength $d$ over the set$\{0,0.1,...,1.0\}$ following \citet{christUndetectableWatermarksLanguage2023}. For EXP and ITS, we set the key length $n=80$, large enough to generate at most $70$ tokens in our main experiments. We set the edit-distance penalty $\gamma=0.0$ and $0.4$ respectively following \citet{kuditipudiRobustDistortionfreeWatermarks2023}. 
For KGW-Pool, we observed high variance of performance. It is because of the random partition in a sequence level (line 6 in Algorithm \ref{alg:kgw-pool-gen}). The logits-add mark module is very sensitive to this partition, which is similar to Unigram. To this end, we resample the key for three times and use the best one as watermarked output during generation in practice. This trick only leads to additional time complexity of generation and won't affect any other analysis in this paper. EXP, ITS, EXP-Pool and ITS-Pool all leverage permutation tests to calculate $p$-value (\lstinline|/* Permutation test */| in Algorithm \ref{alg:exp-pool-detect} and \ref{alg:its-pool-detect}). We conduct the permutations with $5000$ resamples. Following \citet{kuditipudiRobustDistortionfreeWatermarks2023}, we only pre-compute the permutation distribution once instead of recomputing it for each candidate text. This trick reduces the high time complexity of permutation tests and doesn't cause much performance degradation.


\section{Additional Experiments}
\label{app:additional-experiment}

\subsection{Performance of WaterPool with OPT-6.7b}
In this section, we present results of WaterPool on OPT-6.7B in Table \ref{tab:imperceptibility-opt6.7b}, \ref{tab:robustness-opt6.7b} and \ref{tab:robustness-opt6.7b-roc}.

\subsection{Problem with Retrieval Watermark}
\label{app:retrieval-watermark-problem}

In this section, we conduct an experiment to empirically demonstrate the statements made in Section \ref{sec:diff-from-retrieval-watermark}. We utilize eight models: Gemma-2b, Gemma-7b, Llama2-7b, Llama2-13b, Vicuna-7b, Vicuna-13b, OPT-1.3b, and OPT-6.7b, to generate outputs for both open-ended generation and long-form question answering tasks. In our experimental setup, one model is treated as the watermarked model, while outputs from the other models are considered non-watermarked texts. This setup reflects a common real-world scenario. We evaluate the robustness of retrieval watermarking under lexical attack, the weakest form of attack, in Table \ref{tab:diff-from-retrieval}. The results show that even under the weakest attack, the retrieval watermark experiences a significant performance degradation of more than 40\%, while most of other watermarking techniques still achieve high TPR@FPR=1\% (see Table \ref{tab:robustness-opt1.3b} and \ref{tab:robustness-opt6.7b}). This underscores the vulnerability of retrieval watermarking compared to other methods in real-world applications.
{
\begin{table}[htb]
\renewcommand{\arraystretch}{1.1}
\setlength{\tabcolsep}{2pt} 
\centering
\caption{TPR@FPR=1\% of retrieval watermark under lexical attacks. The model in the first column is the model being watermarked. The results are presented in form of \lstinline|(C4 Result/LFQA Result)|. Retrieval watermark is vulnerable even under the weakest lexical attacks.}
\label{tab:diff-from-retrieval}
\scriptsize
\makebox[\textwidth][c]{
\resizebox{\linewidth}{!}{
    \centering
    \begin{tabular}{c|ccccccccc}
    \toprule
    Watermarked Model & Vicuna-13b & Vicuna-7b & Llama2-13b & Llama2-7b & Gemma-2b & Gemma-7b & OPT-1.3b & OPT-6.7b & Avg\\
    \midrule
Vicuna-13b & ~~~~-~~~~ / ~~~~-~~~~ & 16.36 / ~~0.87 & 32.36 / 39.39 & 40.85 / 44.02 & 61.23 / 68.47 & 51.09 / 45.95 & 49.94 / 55.36 & 40.02 / 46.01 & 41.69 / 42.87 \\
Vicuna-7b & 14.62 / ~~0.76 & ~~~~-~~~~ / ~~~~-~~~~ & 39.24 / 40.00 & 41.41 / 42.84 & 61.16 / 68.19 & 53.00 / 45.65 & 49.63 / 54.25 & 39.73 / 45.01 & 42.69 / 42.38 \\
Llama2-13b & 23.36 / ~~9.16 & 31.65 / ~~9.81 & ~~~~-~~~~ / ~~~~-~~~~ & 44.56 / 44.00 & 66.95 / 66.52 & 58.67 / 46.48 & 53.12 / 50.77 & 43.22 / 41.62 & 45.93 / 38.34 \\
Llama2-7b & 29.19 / 10.77 & 31.63 / 10.37 & 41.54 / 41.94 & ~~~~-~~~~ / ~~~~-~~~~ & 68.24 / 66.40 & 60.53 / 47.64 & 53.10 / 53.99 & 43.35 / 45.15 & 46.80 / 39.47 \\
Gemma-2b & 47.32 / 21.65 & 48.20 / 21.88 & 63.43 / 64.71 & 65.86 / 67.15 & ~~~~-~~~~ / ~~~~-~~~~ & 67.78 / 65.23 & 63.05 / 69.16 & 54.13 / 60.29 & 58.54 / 52.87 \\
Gemma-7b & 37.95 / ~~8.93 & 41.03 / ~~9.56 & 55.49 / 42.78 & 58.77 / 45.74 & 68.70 / 63.75 & ~~~~-~~~~ / ~~~~-~~~~ & 58.30 / 54.14 & 48.65 / 44.80 & 52.70 / 38.53 \\
OPT-1.3b & 42.29 / 17.56 & 42.29 / 17.56 & 53.00 / 47.64 & 57.93 / 47.64 & 66.59 / 64.59 & 62.43 / 52.26 & ~~~~-~~~~ / ~~~~-~~~~ & 53.00 / 52.26 & 53.93 / 42.79 \\
OPT-6.7b & 30.73 / 12.83 & 30.73 / ~~8.93 & 47.83 / 41.87 & 47.83 / 41.87 & 62.37 / 59.81 & 57.79 / 41.87 & 52.99 / 55.68 & ~~~~-~~~~ / ~~~~-~~~~ & 47.18 / 37.55 \\
\bottomrule
    \end{tabular}
}
}
\end{table}
}

\subsection{Different Scales of Vector Database}
\label{app:scaling-noisy-database}

We report the numeric results corresponding to Figure \ref{fig:scaling-noisy-database} in Table \ref{tab:scaling-noisy-database}.

\begin{table}[htb]
\centering
\scriptsize
\caption{TPR@FPR=1\% of WaterPool with different database size on open-ended generation task under three different attacks, namely Lex, Dipper and Translation.}
\label{tab:scaling-noisy-database}
\makebox[\textwidth][c]{
\resizebox{\linewidth}{!}{
    \begin{tabular}{cc|cccccccccccc}
\toprule
\multicolumn{2}{c|}{Database Size}   & 50K & 100K & 200K & 400K & 800K & 1.6M & 3.2M & 6.4M & 12.8M & 25.6M & 51.2M & 102.4M \\
\midrule
\multirow{3}{*}{EXP-Pool} & Lex & 96.68 & 96.66 & 96.64 & 96.60 & 96.60 & 96.56 & 96.53 & 96.52 & 96.45 & 96.36 & 96.31 & 96.21 \\
~ & Dipper & 26.80 & 26.75 & 26.78 & 26.71 & 26.60 & 26.63 & 26.66 & 26.47 & 26.45 & 26.37 & 26.18 & 26.11 \\
~ & Translation & 51.18 & 51.15 & 51.22 & 51.16 & 51.02 & 51.00 & 50.93 & 50.81 & 50.69 & 50.56 & 50.43 & 50.29 \\
\midrule
\multirow{3}{*}{KGW-Pool} & Lex & 94.09 & 94.08 & 94.08 & 94.06 & 94.03 & 94.00 & 93.95 & 93.92 & 93.85 & 93.81 & 93.75 & 93.65 \\
~ & Dipper & 23.05 & 23.03 & 22.99 & 22.96 & 22.90 & 22.85 & 22.78 & 22.65 & 22.52 & 22.36 & 22.21 & 22.02 \\
~ & Translation & 40.37 & 40.34 & 40.30 & 40.25 & 40.19 & 40.12 & 40.05 & 39.96 & 39.87 & 39.75 & 39.62 & 39.46 \\
\midrule
\multirow{3}{*}{ITS-Pool} & Lex & 68.95 & 68.94 & 68.93 & 68.92 & 68.90 & 68.89 & 68.87 & 68.83 & 68.79 & 68.75 & 68.69 & 68.63 \\
~ & Dipper & 4.04 & 4.04 & 4.03 & 4.04 & 4.04 & 4.03 & 4.02 & 4.03 & 3.99 & 3.99 & 3.98 & 3.98 \\
~ & Translation & 10.97 & 10.97 & 10.96 & 10.96 & 10.95 & 10.94 & 10.93 & 10.93 & 10.91 & 10.90 & 10.88 & 10.87 \\    
\bottomrule
\end{tabular}
}
}

\end{table}

\subsection{Scaling Length of Outputs}

\begin{figure}
    \centering
    \includegraphics[width=\linewidth]{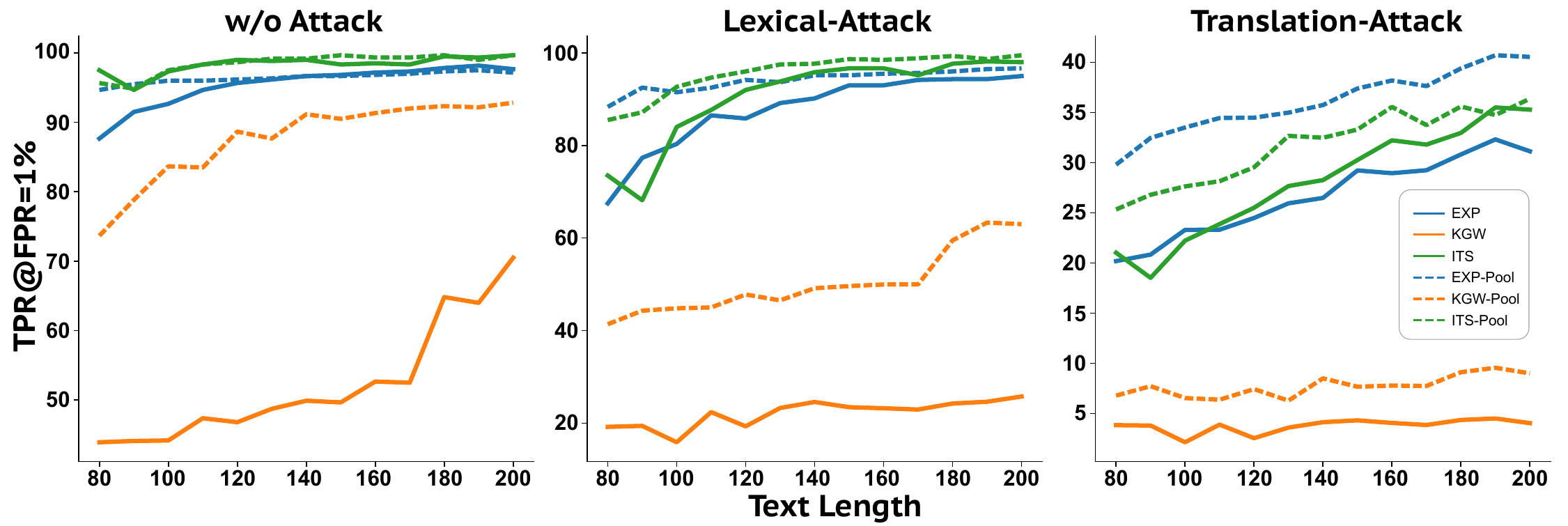}
    \caption{TPR@FPR=1\% of different watermarking techniques with the growths of text length. The same color indicates different methods sharing the same mark module. Solid lines represent original methods while dashed lines represent WaterPool methods.}
    \label{fig:scaling-length}
\end{figure}

We conduct another experiment to investigate the performance of WaterPool with growths of text length. We generate outputs of different lengths $T\in[80, 90, ... 200]$ and calculate the corresponding TPR@FPR=1\% metrics. The results are presented in Figure \ref{fig:scaling-length}. We observe a consistent increase in both the efficacy and robustness of WaterPool, aligning with the findings reported by \citet{kirchenbauerReliabilityWatermarksLarge2023}. Moreover, across all settings,
WaterPool consistently enhance the performance of original watermarking techniques, further underscoring its superior capabilities.

{
\begin{table}[!htb]
\caption{ROC-AUC of different watermarking methods on OPT-1.3B. $\Delta$ is the performance boost brought by WaterPool. The best and second-best results are highlighted in \textbf{bold} and \underline{underline}.}
\label{tab:robustness-opt1.3b-roc}
\renewcommand{\arraystretch}{0.9}
\setlength{\tabcolsep}{2pt} 
\scriptsize
\centering
\makebox[\textwidth][c]{
\resizebox{1\linewidth}{!}{
\begin{tabular}{l|rr|rr|rr|rr} 
    \toprule
    \multicolumn{1}{c}{~} & \multicolumn{2}{c}{w/o Attack} & \multicolumn{2}{c}{Lexical-Attack} & \multicolumn{2}{c}{Dipper-Attack} & \multicolumn{2}{c}{Translation-Attack} \\
    \midrule
    \multicolumn{1}{c}{~} & \multicolumn{1}{c}{value$\uparrow$} & \multicolumn{1}{c}{$\Delta$} & \multicolumn{1}{c}{value$\uparrow$} & \multicolumn{1}{c}{$\Delta$} & \multicolumn{1}{c}{value$\uparrow$} & \multicolumn{1}{c}{$\Delta$} & \multicolumn{1}{c}{value$\uparrow$} & \multicolumn{1}{c}{$\Delta$} \\
    \midrule
    \multicolumn{1}{c}{~} & \multicolumn{8}{c}{Open Text Generation} \\
    \midrule

Gamma  &  99.58$_{\pm 0.01}$ & \multicolumn{1}{c|}{-} & 78.69$_{\pm 0.08}$ & \multicolumn{1}{c|}{-} & 55.18$_{\pm 0.11}$ & \multicolumn{1}{c|}{-} & 58.33$_{\pm 0.18}$ & \multicolumn{1}{c|}{-} \\
Delta  &  94.24$_{\pm 0.05}$ & \multicolumn{1}{c|}{-} & 62.48$_{\pm 0.29}$ & \multicolumn{1}{c|}{-} & 52.47$_{\pm 0.10}$ & \multicolumn{1}{c|}{-} & 54.52$_{\pm 0.16}$ & \multicolumn{1}{c|}{-} \\
\midrule
Unigram  &  99.59$_{\pm 0.06}$ & \multicolumn{1}{c|}{-} & 99.31$_{\pm 0.22}$ & \multicolumn{1}{c|}{-} & \underline{83.60$_{\pm 4.75}$} & \multicolumn{1}{c|}{-} & \underline{90.99$_{\pm 2.47}$} & \multicolumn{1}{c|}{-} \\
KGW  &  \underline{99.87$_{\pm 0.01}$} & \multicolumn{1}{c|}{-} & 99.24$_{\pm 0.02}$ & \multicolumn{1}{c|}{-} & 77.43$_{\pm 0.48}$ & \multicolumn{1}{c|}{-} & 85.88$_{\pm 0.06}$ & \multicolumn{1}{c|}{-} \\
KGW-Pool  &  \textbf{99.90$_{\pm 0.00}$} & 0.03$_{\pm 0.01}$ & \textbf{99.74$_{\pm 0.00}$} & 0.50$_{\pm 0.02}$ & \textbf{84.10$_{\pm 0.89}$} & 6.67$_{\pm 1.37}$ & \textbf{92.15$_{\pm 0.14}$} & 6.27$_{\pm 0.19}$ \\
\midrule
EXP  &  99.45$_{\pm 0.02}$ & \multicolumn{1}{c|}{-} & 98.97$_{\pm 0.02}$ & \multicolumn{1}{c|}{-} & 73.60$_{\pm 0.49}$ & \multicolumn{1}{c|}{-} & 80.45$_{\pm 0.05}$ & \multicolumn{1}{c|}{-} \\
EXP-Pool  &  99.76$_{\pm 0.01}$ & 0.31$_{\pm 0.02}$ & \underline{99.57$_{\pm 0.00}$} & 0.60$_{\pm 0.02}$ & 80.42$_{\pm 0.94}$ & 6.83$_{\pm 1.18}$ & 90.65$_{\pm 0.04}$ & 10.20$_{\pm 0.09}$ \\
\midrule
ITS  &  95.48$_{\pm 0.03}$ & \multicolumn{1}{c|}{-} & 83.19$_{\pm 0.11}$ & \multicolumn{1}{c|}{-} & 59.24$_{\pm 0.13}$ & \multicolumn{1}{c|}{-} & 56.88$_{\pm 0.03}$ & \multicolumn{1}{c|}{-} \\
ITS-Pool  &  99.18$_{\pm 0.01}$ & 3.70$_{\pm 0.03}$ & 96.70$_{\pm 0.00}$ & 13.52$_{\pm 0.11}$ & 61.68$_{\pm 0.03}$ & 2.44$_{\pm 0.15}$ & 71.33$_{\pm 0.03}$ & 14.45$_{\pm 0.06}$ \\

    \midrule
    \multicolumn{1}{c}{~} & \multicolumn{8}{c}{Long-Form Question Answering} \\
    \midrule

Gamma  &  99.85$_{\pm 0.01}$ & \multicolumn{1}{c|}{-} & 80.93$_{\pm 0.17}$ & \multicolumn{1}{c|}{-} & 55.14$_{\pm 0.02}$ & \multicolumn{1}{c|}{-} & 61.69$_{\pm 0.09}$ & \multicolumn{1}{c|}{-} \\
Delta  &  97.99$_{\pm 0.06}$ & \multicolumn{1}{c|}{-} & 65.72$_{\pm 0.17}$ & \multicolumn{1}{c|}{-} & 52.75$_{\pm 0.12}$ & \multicolumn{1}{c|}{-} & 57.48$_{\pm 0.12}$ & \multicolumn{1}{c|}{-} \\
\midrule
Unigram  &  99.83$_{\pm 0.10}$ & \multicolumn{1}{c|}{-} & 99.63$_{\pm 0.19}$ & \multicolumn{1}{c|}{-} & \textbf{87.79$_{\pm 1.92}$} & \multicolumn{1}{c|}{-} & \underline{94.38$_{\pm 1.05}$} & \multicolumn{1}{c|}{-} \\
KGW  &  \textbf{99.97$_{\pm 0.00}$} & \multicolumn{1}{c|}{-} & 99.66$_{\pm 0.00}$ & \multicolumn{1}{c|}{-} & 81.05$_{\pm 0.23}$ & \multicolumn{1}{c|}{-} & 92.34$_{\pm 0.09}$ & \multicolumn{1}{c|}{-} \\
KGW-Pool  &  \underline{99.96$_{\pm 0.00}$} & -0.01$_{\pm 0.00}$ & \underline{99.75$_{\pm 0.02}$} & 0.09$_{\pm 0.02}$ & \underline{87.41$_{\pm 0.58}$} & 6.36$_{\pm 0.38}$ & 94.29$_{\pm 0.09}$ & 1.95$_{\pm 0.12}$ \\
\midrule
EXP  &  99.86$_{\pm 0.02}$ & \multicolumn{1}{c|}{-} & 99.70$_{\pm 0.04}$ & \multicolumn{1}{c|}{-} & 80.55$_{\pm 0.41}$ & \multicolumn{1}{c|}{-} & 91.17$_{\pm 0.06}$ & \multicolumn{1}{c|}{-} \\
EXP-Pool  &  99.92$_{\pm 0.01}$ & 0.06$_{\pm 0.02}$ & \textbf{99.83$_{\pm 0.01}$} & 0.13$_{\pm 0.04}$ & 85.87$_{\pm 0.32}$ & 5.32$_{\pm 0.69}$ & \textbf{96.05$_{\pm 0.05}$} & 4.88$_{\pm 0.11}$ \\
\midrule
ITS  &  97.96$_{\pm 0.06}$ & \multicolumn{1}{c|}{-} & 88.41$_{\pm 0.19}$ & \multicolumn{1}{c|}{-} & 63.12$_{\pm 0.10}$ & \multicolumn{1}{c|}{-} & 68.02$_{\pm 0.21}$ & \multicolumn{1}{c|}{-} \\
ITS-Pool  &  99.75$_{\pm 0.00}$ & 1.79$_{\pm 0.06}$ & 98.48$_{\pm 0.01}$ & 10.07$_{\pm 0.19}$ & 66.35$_{\pm 0.13}$ & 3.23$_{\pm 0.22}$ & 81.99$_{\pm 0.07}$ & 13.97$_{\pm 0.27}$ \\
    
\bottomrule
\end{tabular}
}
}
\end{table}
}

{
\begin{table}[!htb]
\caption{Imperceptibility of different watermarking methods on OPT-6.7B. $\Delta$ is the difference between watermarked texts and non-watermarked texts. The best and second-best results are highlighted in \textbf{bold} and \underline{underline}.}
\label{tab:imperceptibility-opt6.7b}
\renewcommand{\arraystretch}{1.25}
\setlength{\tabcolsep}{1pt} 
\scriptsize
\centering
\makebox[\textwidth][c]{
\resizebox{1.1\linewidth}{!}{
\begin{tabular}{l|rr|rr|rr|rr|rr} 
    \toprule
    \multicolumn{1}{c}{~} & \multicolumn{2}{c}{Glob-distinct2} & \multicolumn{2}{c}{Glob-distinct3} & \multicolumn{2}{c}{Group-distinct2} & \multicolumn{2}{c}{Group-distinct3} & \multicolumn{2}{c}{ppl} \\
    \midrule
    \multicolumn{1}{c}{~} & \multicolumn{1}{c}{value$\uparrow$} & \multicolumn{1}{c}{$\Delta$$\uparrow$} & \multicolumn{1}{c}{value$\uparrow$} & \multicolumn{1}{c}{$\Delta$$\uparrow$} & \multicolumn{1}{c}{value$\uparrow$} & \multicolumn{1}{c}{$\Delta$$\uparrow$} & \multicolumn{1}{c}{value$\uparrow$} & \multicolumn{1}{c}{$\Delta$$\uparrow$} & \multicolumn{1}{c}{value$\downarrow$} & \multicolumn{1}{c}{$\Delta$$\downarrow$}\\
    \midrule
    \multicolumn{1}{c}{~} & \multicolumn{10}{c}{Open Text Generation} \\
    \midrule

    Non-watermark  &  39.38$_{\pm 0.88}$ & 0.00$_{\pm 0.00}$ & 75.81$_{\pm 2.02}$ & 0.00$_{\pm 0.00}$ & 84.27$_{\pm 2.30}$ & 0.00$_{\pm 0.00}$ & 94.08$_{\pm 2.73}$ & 0.00$_{\pm 0.00}$ & \underline{6.84$_{\pm 0.01}$} & \underline{0.00$_{\pm 0.00}$} \\
    Gamma  &  39.91$_{\pm 0.01}$ & 0.54$_{\pm 0.86}$ & 76.97$_{\pm 0.02}$ & 1.16$_{\pm 2.00}$ & 85.55$_{\pm 0.02}$ & 1.28$_{\pm 2.32}$ & 95.62$_{\pm 0.02}$ & 1.54$_{\pm 2.74}$ & 6.84$_{\pm 0.01}$ & 0.00$_{\pm 0.02}$ \\
    Delta  &  39.89$_{\pm 0.03}$ & 0.51$_{\pm 0.84}$ & 76.98$_{\pm 0.04}$ & 1.17$_{\pm 1.98}$ & 85.58$_{\pm 0.03}$ & 1.30$_{\pm 2.28}$ & \underline{95.65$_{\pm 0.02}$} & \underline{1.57$_{\pm 2.70}$} & 6.85$_{\pm 0.01}$ & 0.01$_{\pm 0.01}$ \\
    Unigram  &  36.47$_{\pm 1.80}$ & -2.91$_{\pm 0.92}$ & 72.36$_{\pm 2.39}$ & -3.45$_{\pm 1.19}$ & 82.88$_{\pm 2.17}$ & -1.39$_{\pm 1.54}$ & 94.82$_{\pm 0.65}$ & 0.73$_{\pm 2.27}$ & 8.78$_{\pm 0.42}$ & 1.94$_{\pm 0.42}$ \\
    KGW  &  38.21$_{\pm 0.01}$ & -1.17$_{\pm 0.87}$ & 74.79$_{\pm 0.02}$ & -1.02$_{\pm 2.03}$ & 85.27$_{\pm 0.02}$ & 0.99$_{\pm 2.32}$ & 95.57$_{\pm 0.02}$ & 1.49$_{\pm 2.74}$ & 8.43$_{\pm 0.02}$ & 1.59$_{\pm 0.02}$ \\
    KGW-Pool  &  \textbf{41.87$_{\pm 0.05}$} & \textbf{2.49$_{\pm 0.84}$} & \textbf{79.65$_{\pm 0.00}$} & \textbf{3.84$_{\pm 2.02}$} & \textbf{87.41$_{\pm 0.00}$} & \textbf{3.14$_{\pm 2.30}$} & \textbf{96.73$_{\pm 0.02}$} & \textbf{2.64$_{\pm 2.74}$} & 8.83$_{\pm 0.03}$ & 2.00$_{\pm 0.03}$ \\
    EXP  &  32.03$_{\pm 0.06}$ & -7.35$_{\pm 0.91}$ & 61.91$_{\pm 0.39}$ & -13.90$_{\pm 2.26}$ & 72.96$_{\pm 0.26}$ & -11.31$_{\pm 2.18}$ & 81.92$_{\pm 0.16}$ & -12.16$_{\pm 2.65}$ & 6.87$_{\pm 0.02}$ & 0.03$_{\pm 0.01}$ \\
    EXP-Pool  &  \underline{39.94$_{\pm 0.00}$} & \underline{0.56$_{\pm 0.88}$} & \underline{77.00$_{\pm 0.04}$} & \underline{1.19$_{\pm 2.00}$} & 85.58$_{\pm 0.03}$ & 1.30$_{\pm 2.29}$ & 95.65$_{\pm 0.04}$ & 1.56$_{\pm 2.70}$ & 6.85$_{\pm 0.01}$ & 0.01$_{\pm 0.01}$ \\
    ITS  &  35.89$_{\pm 0.02}$ & -3.49$_{\pm 0.87}$ & 68.24$_{\pm 0.07}$ & -7.57$_{\pm 1.95}$ & 75.87$_{\pm 0.09}$ & -8.41$_{\pm 2.22}$ & 84.53$_{\pm 0.11}$ & -9.56$_{\pm 2.63}$ & \textbf{6.57$_{\pm 0.01}$} & \textbf{-0.27$_{\pm 0.01}$} \\
    ITS-Pool  &  39.90$_{\pm 0.00}$ & 0.53$_{\pm 0.88}$ & 76.99$_{\pm 0.00}$ & 1.18$_{\pm 2.02}$ & \underline{85.58$_{\pm 0.00}$} & \underline{1.31$_{\pm 2.30}$} & 95.64$_{\pm 0.00}$ & 1.56$_{\pm 2.73}$ & 6.85$_{\pm 0.00}$ & 0.01$_{\pm 0.01}$ \\

    \midrule
    \multicolumn{1}{c}{~} & \multicolumn{10}{c}{Long-Form Question Answering} \\
    \midrule

    Non-watermark  &  \underline{33.00$_{\pm 0.02}$} & \underline{0.00$_{\pm 0.00}$} & \underline{71.26$_{\pm 0.03}$} & \underline{0.00$_{\pm 0.00}$} & \textbf{87.14$_{\pm 0.05}$} & \textbf{0.00$_{\pm 0.00}$} & \textbf{97.09$_{\pm 0.02}$} & \textbf{0.00$_{\pm 0.00}$} & 8.83$_{\pm 0.02}$ & 0.00$_{\pm 0.00}$ \\
    Gamma  &  32.94$_{\pm 0.04}$ & -0.06$_{\pm 0.04}$ & 71.21$_{\pm 0.10}$ & -0.05$_{\pm 0.09}$ & 87.05$_{\pm 0.03}$ & -0.08$_{\pm 0.03}$ & 97.03$_{\pm 0.02}$ & -0.05$_{\pm 0.03}$ & 8.80$_{\pm 0.01}$ & -0.04$_{\pm 0.03}$ \\
    Delta  &  32.96$_{\pm 0.03}$ & -0.03$_{\pm 0.03}$ & 71.20$_{\pm 0.08}$ & -0.06$_{\pm 0.10}$ & 87.06$_{\pm 0.02}$ & -0.07$_{\pm 0.07}$ & 97.04$_{\pm 0.02}$ & -0.04$_{\pm 0.04}$ & 8.80$_{\pm 0.02}$ & -0.04$_{\pm 0.05}$ \\
    Unigram  &  29.19$_{\pm 2.01}$ & -3.80$_{\pm 2.02}$ & 65.08$_{\pm 2.82}$ & -6.18$_{\pm 2.84}$ & 82.46$_{\pm 2.24}$ & -4.67$_{\pm 2.23}$ & 95.06$_{\pm 0.63}$ & -2.02$_{\pm 0.62}$ & 10.44$_{\pm 0.79}$ & 1.61$_{\pm 0.79}$ \\
    KGW  &  31.29$_{\pm 0.10}$ & -1.70$_{\pm 0.09}$ & 68.05$_{\pm 0.14}$ & -3.21$_{\pm 0.11}$ & 86.00$_{\pm 0.05}$ & -1.14$_{\pm 0.01}$ & 96.52$_{\pm 0.03}$ & -0.57$_{\pm 0.03}$ & 10.84$_{\pm 0.01}$ & 2.00$_{\pm 0.03}$ \\
    KGW-Pool  &  \textbf{34.81$_{\pm 0.02}$} & \textbf{1.82$_{\pm 0.04}$} & \textbf{73.48$_{\pm 0.03}$} & \textbf{2.22$_{\pm 0.06}$} & 85.15$_{\pm 0.05}$ & -1.98$_{\pm 0.10}$ & 95.10$_{\pm 0.00}$ & -1.99$_{\pm 0.02}$ & 10.48$_{\pm 0.03}$ & 1.65$_{\pm 0.01}$ \\
    EXP  &  25.16$_{\pm 0.79}$ & -7.83$_{\pm 0.78}$ & 54.09$_{\pm 1.78}$ & -17.17$_{\pm 1.76}$ & 75.30$_{\pm 0.08}$ & -11.84$_{\pm 0.11}$ & 84.66$_{\pm 0.09}$ & -12.43$_{\pm 0.07}$ & \underline{8.69$_{\pm 0.20}$} & \underline{-0.14$_{\pm 0.22}$} \\
    EXP-Pool  &  32.93$_{\pm 0.06}$ & -0.07$_{\pm 0.05}$ & 71.12$_{\pm 0.25}$ & -0.14$_{\pm 0.22}$ & \underline{87.08$_{\pm 0.00}$} & \underline{-0.06$_{\pm 0.05}$} & \underline{97.05$_{\pm 0.01}$} & \underline{-0.04$_{\pm 0.03}$} & 8.80$_{\pm 0.01}$ & -0.04$_{\pm 0.04}$ \\
    ITS  &  29.35$_{\pm 0.04}$ & -3.64$_{\pm 0.05}$ & 62.67$_{\pm 0.10}$ & -8.60$_{\pm 0.10}$ & 77.37$_{\pm 0.10}$ & -9.76$_{\pm 0.06}$ & 85.92$_{\pm 0.13}$ & -11.17$_{\pm 0.12}$ & \textbf{8.38$_{\pm 0.00}$} & \textbf{-0.45$_{\pm 0.02}$} \\
    ITS-Pool  &  32.90$_{\pm 0.00}$ & -0.09$_{\pm 0.02}$ & 71.15$_{\pm 0.00}$ & -0.11$_{\pm 0.03}$ & 87.05$_{\pm 0.00}$ & -0.08$_{\pm 0.05}$ & 97.03$_{\pm 0.00}$ & -0.05$_{\pm 0.02}$ & 8.79$_{\pm 0.00}$ & -0.04$_{\pm 0.02}$ \\
\bottomrule
\end{tabular}
}
}
\end{table}
}

{
\begin{table}[!htb]
\caption{Efficacy and Robustness of different watermarking methods on OPT-6.7B evaluated with TPR@FPR=1\%. $\Delta$ is the performance boost brought by WaterPool. The best and second-best results are highlighted in \textbf{bold} and \underline{underline}.}
\label{tab:robustness-opt6.7b}
\renewcommand{\arraystretch}{1.1}
\setlength{\tabcolsep}{2pt} 
\scriptsize
\centering
\makebox[\textwidth][c]{
\resizebox{1\linewidth}{!}{
\begin{tabular}{l|rr|rr|rr|rr} 
    \toprule
    \multicolumn{1}{c}{~} & \multicolumn{2}{c}{w/o Attack} & \multicolumn{2}{c}{Lexical-Attack} & \multicolumn{2}{c}{Dipper-Attack} & \multicolumn{2}{c}{Translation-Attack} \\
    \midrule
    \multicolumn{1}{c}{~} & \multicolumn{1}{c}{value$\uparrow$} & \multicolumn{1}{c}{$\Delta$} & \multicolumn{1}{c}{value$\uparrow$} & \multicolumn{1}{c}{$\Delta$} & \multicolumn{1}{c}{value$\uparrow$} & \multicolumn{1}{c}{$\Delta$} & \multicolumn{1}{c}{value$\uparrow$} & \multicolumn{1}{c}{$\Delta$} \\
    \midrule
    \multicolumn{1}{c}{~} & \multicolumn{8}{c}{Open Text Generation} \\
    \midrule

    Gamma  &  95.46$_{\pm 0.07}$ & - & 16.08$_{\pm 0.44}$ & - & 2.40$_{\pm 0.12}$ & - & 2.93$_{\pm 0.09}$ & - \\
    Delta  &  70.85$_{\pm 0.11}$ & - & 7.34$_{\pm 0.26}$ & - & 2.10$_{\pm 0.04}$ & - & 2.60$_{\pm 0.05}$ & - \\
    \midrule
    Unigram  &  93.68$_{\pm 2.32}$ & - & 89.39$_{\pm 4.31}$ & - & \underline{23.62$_{\pm 13.20}$} & - & 36.52$_{\pm 14.15}$ & - \\
    KGW  &  \textbf{97.58$_{\pm 0.08}$} & - & 86.17$_{\pm 0.49}$ & - & 14.63$_{\pm 0.05}$ & - & 25.88$_{\pm 0.16}$ & - \\
    KGW-Pool  &  \underline{96.77$_{\pm 0.11}$} & -0.81$_{\pm 0.03}$ & \underline{93.30$_{\pm 0.08}$} & 7.13$_{\pm 0.42}$ & \textbf{23.66$_{\pm 1.03}$} & 9.03$_{\pm 1.06}$ & \underline{37.91$_{\pm 0.12}$} & 12.03$_{\pm 0.10}$ \\
    \midrule
    EXP  &  94.84$_{\pm 0.35}$ & - & 88.97$_{\pm 0.68}$ & - & 15.40$_{\pm 1.24}$ & - & 27.16$_{\pm 1.28}$ & - \\
    EXP-Pool  &  96.56$_{\pm 0.96}$ & 1.72$_{\pm 1.31}$ & \textbf{94.19$_{\pm 0.07}$} & 5.22$_{\pm 0.66}$ & 22.03$_{\pm 0.85}$ & 6.63$_{\pm 2.05}$ & \textbf{43.84$_{\pm 0.52}$} & 16.68$_{\pm 0.89}$ \\
    \midrule
    ITS  &  64.71$_{\pm 0.46}$ & - & 21.06$_{\pm 0.59}$ & - & 2.14$_{\pm 0.08}$ & - & 2.95$_{\pm 0.17}$ & - \\
    ITS-Pool  &  88.43$_{\pm 0.11}$ & 23.72$_{\pm 0.47}$ & 60.89$_{\pm 0.34}$ & 39.84$_{\pm 0.57}$ & 3.77$_{\pm 0.06}$ & 1.63$_{\pm 0.13}$ & 8.98$_{\pm 0.11}$ & 6.03$_{\pm 0.20}$ \\

    \midrule
    \multicolumn{1}{c}{~} & \multicolumn{8}{c}{Long-Form Question Answering} \\
    \midrule

    Gamma  &  98.33$_{\pm 0.04}$ & - & 20.20$_{\pm 0.39}$ & - & 2.22$_{\pm 0.05}$ & - & 4.34$_{\pm 0.12}$ & - \\
    Delta  &  89.08$_{\pm 0.16}$ & - & 12.08$_{\pm 0.10}$ & - & 2.11$_{\pm 0.13}$ & - & 4.22$_{\pm 0.15}$ & - \\
    \midrule
    Unigram  &  96.50$_{\pm 2.59}$ & - & 90.67$_{\pm 6.19}$ & - & \underline{29.91$_{\pm 13.55}$} & - & 42.00$_{\pm 19.57}$ & - \\
    KGW  &  \underline{99.29$_{\pm 0.01}$} & - & 92.69$_{\pm 0.30}$ & - & 17.59$_{\pm 0.26}$ & - & 41.33$_{\pm 0.24}$ & - \\
    KGW-Pool  &  \textbf{99.40$_{\pm 0.17}$} & 0.11$_{\pm 0.17}$ & \underline{97.25$_{\pm 0.17}$} & 4.56$_{\pm 0.23}$ & 28.17$_{\pm 0.84}$ & 10.57$_{\pm 0.83}$ & 45.67$_{\pm 2.06}$ & 4.33$_{\pm 1.88}$ \\
    \midrule
    EXP  &  98.66$_{\pm 0.08}$ & - & 96.11$_{\pm 0.21}$ & - & 22.41$_{\pm 1.82}$ & - & \underline{49.18$_{\pm 0.97}$} & - \\
    EXP-Pool  &  99.27$_{\pm 0.04}$ & 0.61$_{\pm 0.06}$ & \textbf{98.12$_{\pm 0.03}$} & 2.01$_{\pm 0.23}$ & \textbf{32.14$_{\pm 0.08}$} & 9.73$_{\pm 1.89}$ & \textbf{65.61$_{\pm 0.06}$} & 16.43$_{\pm 0.92}$ \\
    \midrule
    ITS  &  81.56$_{\pm 0.37}$ & - & 33.43$_{\pm 0.74}$ & - & 2.77$_{\pm 0.07}$ & - & 6.38$_{\pm 0.19}$ & - \\
    ITS-Pool  &  96.48$_{\pm 0.08}$ & 14.92$_{\pm 0.37}$ & 78.37$_{\pm 0.36}$ & 44.94$_{\pm 0.66}$ & 6.10$_{\pm 0.26}$ & 3.33$_{\pm 0.21}$ & 20.88$_{\pm 0.36}$ & 14.50$_{\pm 0.44}$ \\

\bottomrule
\end{tabular}
}
}
\end{table}
}

{
\begin{table}[!htb]
\caption{ROC-AUC of different watermarking methods on OPT-6.7B. $\Delta$ is the performance boost brought by WaterPool. The best and second-best results are highlighted in \textbf{bold} and \underline{underline}.}
\label{tab:robustness-opt6.7b-roc}
\renewcommand{\arraystretch}{1.1}
\setlength{\tabcolsep}{2pt} 
\scriptsize
\centering
\makebox[\textwidth][c]{
\resizebox{1\linewidth}{!}{
\begin{tabular}{l|rr|rr|rr|rr} 
    \toprule
    \multicolumn{1}{c}{~} & \multicolumn{2}{c}{w/o Attack} & \multicolumn{2}{c}{Lexical-Attack} & \multicolumn{2}{c}{Dipper-Attack} & \multicolumn{2}{c}{Translation-Attack} \\
    \midrule
    \multicolumn{1}{c}{~} & \multicolumn{1}{c}{value$\uparrow$} & \multicolumn{1}{c}{$\Delta$} & \multicolumn{1}{c}{value$\uparrow$} & \multicolumn{1}{c}{$\Delta$} & \multicolumn{1}{c}{value$\uparrow$} & \multicolumn{1}{c}{$\Delta$} & \multicolumn{1}{c}{value$\uparrow$} & \multicolumn{1}{c}{$\Delta$} \\
    \midrule
    \multicolumn{1}{c}{~} & \multicolumn{8}{c}{Open Text Generation} \\
    \midrule

    Gamma  &  99.33$_{\pm 0.01}$ & \multicolumn{1}{c|}{-} & 77.01$_{\pm 0.14}$ & \multicolumn{1}{c|}{-} & 55.20$_{\pm 0.17}$ & \multicolumn{1}{c|}{-} & 57.47$_{\pm 0.20}$ & \multicolumn{1}{c|}{-} \\
    Delta  &  92.84$_{\pm 0.05}$ & \multicolumn{1}{c|}{-} & 61.06$_{\pm 0.26}$ & \multicolumn{1}{c|}{-} & 52.39$_{\pm 0.07}$ & \multicolumn{1}{c|}{-} & 53.83$_{\pm 0.04}$ & \multicolumn{1}{c|}{-} \\
    \midrule
    Unigram  &  99.48$_{\pm 0.19}$ & \multicolumn{1}{c|}{-} & 99.12$_{\pm 0.38}$ & \multicolumn{1}{c|}{-} & \textbf{83.94$_{\pm 5.55}$} & \multicolumn{1}{c|}{-} & \underline{89.73$_{\pm 3.35}$} & \multicolumn{1}{c|}{-} \\
    KGW  &  \underline{99.77$_{\pm 0.01}$} & \multicolumn{1}{c|}{-} & 98.96$_{\pm 0.03}$ & \multicolumn{1}{c|}{-} & 76.73$_{\pm 0.17}$ & \multicolumn{1}{c|}{-} & 83.73$_{\pm 0.16}$ & \multicolumn{1}{c|}{-} \\
    KGW-Pool  &  \textbf{99.84$_{\pm 0.00}$} & 0.06$_{\pm 0.01}$ & \textbf{99.62$_{\pm 0.00}$} & 0.66$_{\pm 0.03}$ & \underline{83.88$_{\pm 0.71}$} & 7.14$_{\pm 0.76}$ & \textbf{90.85$_{\pm 0.09}$} & 7.13$_{\pm 0.23}$ \\
    \midrule
    EXP  &  99.12$_{\pm 0.03}$ & \multicolumn{1}{c|}{-} & 98.32$_{\pm 0.04}$ & \multicolumn{1}{c|}{-} & 71.72$_{\pm 0.49}$ & \multicolumn{1}{c|}{-} & 78.93$_{\pm 1.57}$ & \multicolumn{1}{c|}{-} \\
    EXP-Pool  &  99.46$_{\pm 0.21}$ & 0.34$_{\pm 0.18}$ & \underline{99.29$_{\pm 0.03}$} & 0.96$_{\pm 0.07}$ & 77.78$_{\pm 1.12}$ & 6.06$_{\pm 1.61}$ & 88.22$_{\pm 0.13}$ & 9.29$_{\pm 1.45}$ \\
    \midrule
    ITS  &  93.25$_{\pm 0.06}$ & \multicolumn{1}{c|}{-} & 79.89$_{\pm 0.10}$ & \multicolumn{1}{c|}{-} & 59.15$_{\pm 0.18}$ & \multicolumn{1}{c|}{-} & 55.34$_{\pm 0.19}$ & \multicolumn{1}{c|}{-} \\
    ITS-Pool  &  98.63$_{\pm 0.01}$ & 5.38$_{\pm 0.07}$ & 95.40$_{\pm 0.06}$ & 15.51$_{\pm 0.07}$ & 61.17$_{\pm 0.12}$ & 2.02$_{\pm 0.23}$ & 69.49$_{\pm 0.10}$ & 14.15$_{\pm 0.16}$ \\

    \midrule
    \multicolumn{1}{c}{~} & \multicolumn{8}{c}{Long-Form Question Answering} \\
    \midrule

    Gamma  &  99.81$_{\pm 0.00}$ & \multicolumn{1}{c|}{-} & 80.10$_{\pm 0.11}$ & \multicolumn{1}{c|}{-} & 54.91$_{\pm 0.06}$ & \multicolumn{1}{c|}{-} & 60.16$_{\pm 0.05}$ & \multicolumn{1}{c|}{-} \\
    Delta  &  97.74$_{\pm 0.02}$ & \multicolumn{1}{c|}{-} & 65.49$_{\pm 0.07}$ & \multicolumn{1}{c|}{-} & 52.54$_{\pm 0.12}$ & \multicolumn{1}{c|}{-} & 56.29$_{\pm 0.07}$ & \multicolumn{1}{c|}{-} \\
    \midrule
    Unigram  &  99.80$_{\pm 0.14}$ & \multicolumn{1}{c|}{-} & 99.57$_{\pm 0.26}$ & \multicolumn{1}{c|}{-} & \textbf{88.43$_{\pm 4.17}$} & \multicolumn{1}{c|}{-} & \underline{93.74$_{\pm 2.41}$} & \multicolumn{1}{c|}{-} \\
    KGW  &  \textbf{99.96$_{\pm 0.00}$} & \multicolumn{1}{c|}{-} & 99.57$_{\pm 0.00}$ & \multicolumn{1}{c|}{-} & 80.23$_{\pm 0.16}$ & \multicolumn{1}{c|}{-} & 91.00$_{\pm 0.08}$ & \multicolumn{1}{c|}{-} \\
    KGW-Pool  &  \underline{99.95$_{\pm 0.01}$} & -0.01$_{\pm 0.01}$ & \underline{99.73$_{\pm 0.01}$} & 0.17$_{\pm 0.00}$ & \underline{86.84$_{\pm 0.62}$} & 6.61$_{\pm 0.49}$ & 93.32$_{\pm 0.04}$ & 2.31$_{\pm 0.10}$ \\
    \midrule
    EXP  &  99.81$_{\pm 0.01}$ & \multicolumn{1}{c|}{-} & 99.54$_{\pm 0.03}$ & \multicolumn{1}{c|}{-} & 77.31$_{\pm 1.28}$ & \multicolumn{1}{c|}{-} & 89.34$_{\pm 0.34}$ & \multicolumn{1}{c|}{-} \\
    EXP-Pool  &  99.90$_{\pm 0.00}$ & 0.09$_{\pm 0.01}$ & \textbf{99.77$_{\pm 0.01}$} & 0.23$_{\pm 0.02}$ & 84.38$_{\pm 0.15}$ & 7.07$_{\pm 1.16}$ & \textbf{94.71$_{\pm 0.02}$} & 5.36$_{\pm 0.34}$ \\
    \midrule
    ITS  &  97.04$_{\pm 0.02}$ & \multicolumn{1}{c|}{-} & 86.25$_{\pm 0.04}$ & \multicolumn{1}{c|}{-} & 62.49$_{\pm 0.14}$ & \multicolumn{1}{c|}{-} & 65.94$_{\pm 0.10}$ & \multicolumn{1}{c|}{-} \\
    ITS-Pool  &  99.67$_{\pm 0.01}$ & 2.63$_{\pm 0.01}$ & 98.08$_{\pm 0.01}$ & 11.83$_{\pm 0.04}$ & 65.85$_{\pm 0.13}$ & 3.36$_{\pm 0.28}$ & 80.03$_{\pm 0.10}$ & 14.10$_{\pm 0.18}$ \\

\bottomrule
\end{tabular}
}
}
\end{table}
}


\end{document}